\title{A Sublinear-Time Quantum Algorithm for \\ Approximating Partition Functions}
\author{Arjan Cornelissen\thanks{QuSoft, University of Amsterdam. \url{arjan.cornelissen@cwi.nl}}~\thanks{IBM Quantum, IBM T.J. Watson Research Center. \url{arjan.cornelissen@ibm.com}}
   \and Yassine Hamoudi\thanks{Simons Institute for the Theory of Computing, University of California, Berkeley. \url{ys.hamoudi@gmail.com}}}
\date{\today}
\begin{document}

\maketitle


\begin{abstract}
  
We present a novel quantum algorithm for estimating Gibbs partition functions in \emph{sublinear time} with respect to the logarithm of the size of the state space. This is the first speed-up of this type to be obtained over the seminal nearly-linear time  algorithm of {\v{S}}tefankovi{\v{c}}, Vempala and Vigoda~\cite{SVV09j}. Our result also preserves the quadratic speed-up in precision and spectral gap achieved in previous work by exploiting the properties of quantum Markov chains. As an application, we obtain new polynomial improvements over the best-known algorithms for computing the partition function of the Ising model, counting the number of $k$-colorings, matchings or independent sets of a graph, and estimating the volume of a convex body.

Our approach relies on developing new variants of the quantum phase and amplitude estimation algorithms that return nearly \emph{unbiased} estimates with \emph{low variance} and \emph{without destroying} their initial quantum state. We extend these subroutines into a nearly unbiased quantum mean estimator that reduces the variance quadratically faster than the classical empirical mean. No such estimator was known to exist prior to our work. These properties, which are of general interest, lead to better convergence guarantees within the paradigm of simulated annealing for computing partition functions.

\end{abstract}

\section{Introduction}

The Boltzmann–Gibbs distribution is a paradigmatic tool for modeling systems that obey the principle of maximum entropy.
It arises in several fields of research such as statistical mechanics~\cite{Geo11b,FV17b,Sin82b}, economic modeling~\cite{DY00j}, image processing~\cite{GG84j,Bes86j}, statistical learning theory~\cite{Cat04b}, etc.\
The probability assigned by the Gibbs distribution to each possible configuration of a system is inversely proportional to the exponential of its energy multiplied by the inverse temperature.
Mathematically, for a classical Hamiltonian $H : \Omega \ra \itv{0}{n}$ of degree $n$ specifying the energy level of each configuration $x$, the Gibbs distribution at inverse temperature $\beta$ is given by $\pi_{\beta}(x) = \frac{1}{Z(\beta)} e^{-\beta H(x)}$ where the normalization factor
  \begin{equation}
    Z(\beta) = \sum_{x \in \Omega} e^{-\beta H(x)}
  \end{equation}
is called the \emph{partition function}.
While it is often straightforward to evaluate the partition function at high temperature (when $\beta = 0$, it is just the number of possible configurations), the low-temperature regime captures ground state properties that are challenging to compute. For instance, $Z(\infty)$ can represent the cardinalities of exponentially large combinatorial structures (such as the number of colorings of a graph~\cite{Jer95j,SVV09j}, the volume of convex bodies~\cite{DFK91j,DF91c} or the permanent of non-negative matrices~\cite{JSV04j}) which are generally \#P-hard to compute exactly~\cite{Val79j,DF88j,JS93j}.

The standard approach for evaluating partition functions at low temperature is to resort to Markov chain Monte Carlo methods~\cite{JS96b}.
A celebrated line of works~\cite{VC72j,JVV86j,DF91c,BSVV08j,SVV09j} has shown how to turn the ability to efficiently \emph{sample} from the Gibbs distribution into that to efficiently \emph{approximate} the partition function.
At a high level, these works rely on the same two-stage simulated annealing algorithm.
First, they compute a short \emph{cooling schedule}, which is an increasing sequence of inverse temperature $0 = \beta_0 < \dots < \beta_{\ell} = \infty$ with limited fluctuations in Gibbs distributions between two consecutive values.
Next, the partition function at low temperature is expressed as a telescoping product
  \begin{equation}
    \label{Eq:telescop}
    Z(\infty) = Z(0) \prod_{i = 0}^{\ell-1} \frac{Z(\beta_{i+1})}{Z(\beta_i)}
  \end{equation}
which is approximated by using a suitable \emph{product estimator}. As an example, the seminal algorithm of {\v{S}}tefankovi{\v{c}}, Vempala and Vigoda~\cite{SVV09j} generates a so-called ``Chebyshev cooling schedule'' of length\footnote{We use the notation $\wbo{.}$ to hide polylogarithmic factors in the argument.} $\ell = \wbo{\sqrt{\log\abs{\Omega}}}$ (ignoring logarithmic dependences on the degree $n$, which is often on the order of $n \sim \log \abs{\Omega}$) where each ratio $Z(\beta_{i+1})/Z(\beta_i)$ is expressed as the expectation value of a random variable $X_i$ with bounded relative second moment $\ex{X_i^2} \leq \bo{\ex{X_i}^2}$. Such schedules are known to admit a product estimator that requires $\bo{\ell^2/\eps^2}$ classical Gibbs samples to estimate $Z(\infty) = Z(0) \cdot \ex{X_1} \cdots \ex{X_{\ell}}$ with relative error $\eps$.
Thus, if we let~$\delta$ denote the spectral gap of a (ergodic reversible) Markov chain generating samples from the considered Gibbs distributions, the overall cost of the algorithm presented in~\cite{SVV09j} is~$\wbo{\log\abs{\Omega}/(\eps^2\delta)}$.

The theory of quantum algorithms provides several directions for accelerating the computation of partition functions.
Quantum Markov chains~\cite{Sze04c,MNRS11j} can prepare coherent ``qsample'' encodings $\ket{\pi_{\beta}} = \sum_x \sqrt{\pi_{\beta}(x)} \ket{x}$ of the Gibbs distribution with a quadratic improvement in spectral gap for the rate of convergence (but an increase dependence on other parameters).
Quantum phase estimation~\cite{Kit95p} and amplitude estimation~\cite{BHMT02j} lead to quadratically better convergence rates for estimating expectation values~\cite{Ter99d,AW99p,Hei02j,BDGT11p,Mon15j,HM19c,Ham21c}.
Yet, while this may hint at the existence of an $\wbo{\sqrt{\log\abs{\Omega}}/(\eps\sqrt{\delta})}$ quantum algorithm for estimating partition functions, the best known algorithms~\cite{HW20c,AHN22j} still require a linear scaling $\wbo{\log\abs{\Omega}/(\eps\sqrt{\delta})}$ with the logarithm of the size of the state space.
This bottleneck is due to additional challenges posed by current quantum algorithmic techniques.
It is for instance significantly harder to prepare the qsample $\ket{\pi_{\beta}}$ (at low temperature) than to implement the reflection $\id - 2\proj{\pi_{\beta}}$ through it. This obstacle requires using \emph{nondestructive} procedures~\cite{MW05j,WA08j,WCNA09j,TOV+11j,ORR13j,HW20c} to recycle the same qsamples all along the algorithm, and to rely mostly on the reflection operator.
Another fundamental limitation faced by current best quantum mean estimators~\cite{Mon15j,HM19c,Ham21c} is the presence of \emph{biases} in the estimates that degrade the convergence guarantee of the product estimators.


\subsection{Contributions}

Our main contribution is to develop the first quantum algorithm for approximating Gibbs partition functions with a complexity scaling \emph{sublinearly} with respect to the logarithm of the size of the state space. More precisely, we prove the next theorem in Section~\ref{Sec:partition}.

\begin{rtheorem}[Theorem~\ref{Thm:QSA} {\normalfont (Informal)}]
  There is a quantum algorithm such that, given a Gibbs distribution generated by a Markov chain with spectral gap $\delta$, it computes an estimate $\td{Z}$ of the partition function at zero temperature satisfying $\abs{\td{Z} - Z(\infty)} \leq \eps Z(\infty)$ by using $\wbo[\big]{\log^{3/4}(|\Omega|)\log^{3/2}(n)/(\eps\sqrt{\delta})}$ steps of the quantum walk operator.
\end{rtheorem}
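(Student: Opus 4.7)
The plan is to follow the simulated-annealing template of~\cite{SVV09j}, approximating $Z(\infty)$ via the telescoping product~\eqref{Eq:telescop}, but replacing the classical empirical mean at each ratio by the nearly unbiased quantum mean estimator announced in the paper. Concretely, I would first build a (Chebyshev-style) cooling schedule $0 = \beta_0 < \dots < \beta_\ell = \infty$ of length $\ell = \wbo{\sqrt{\log|\Omega|}}$ together with random variables $X_i$ sampled from $\pi_{\beta_i}$ such that $\ex{X_i} = Z(\beta_{i+1})/Z(\beta_i)$ and $\ex{X_i^2} = \bo{\ex{X_i}^2}$, adapting the classical construction of~\cite{SVV09j} and plugging in the same nondestructive quantum mean estimator for the internal rate queries, so that building the schedule itself fits within the budget.

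For each $i$, the Szegedy quantum walk prepares a qsample $\ket{\pi_{\beta_i}}$ in $\wbo{1/\sqrt{\delta}}$ walk steps. Feeding $\ket{\pi_{\beta_i}}$ into the nondestructive nearly unbiased mean estimator, with $T$ internal amplitude-estimation samples of $X_i$, produces an estimator $\mut_i$ with standard deviation $\bo{\ex{X_i}/T}$ (a quadratic improvement over the classical $1/\sqrt{T}$ rate) and with bias much smaller than its standard deviation. The procedure returns $\ket{\pi_{\beta_i}}$ essentially unharmed, so the single expensive qsample is recycled across all $T$ internal samples rather than re-prepared each time. Setting $\td{Z} = Z(0)\prod_{i=0}^{\ell-1}\mut_i$, a bias-variance expansion of
\begin{equation}
  \frac{\td{Z}}{Z(\infty)} - 1 = \prod_{i=0}^{\ell-1}\left(1 + \frac{\mut_i - \ex{X_i}}{\ex{X_i}}\right) - 1
\end{equation}
shows that with relative standard deviation $\bo{1/\sqrt{\ell}}$ and negligible relative bias per factor, the overall relative error is at most $\eps$ with high probability, which is achieved at $T = \wbo{\sqrt{\ell}/\eps}$. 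Multiplying by the $\ell$ factors and by the per-sample walk cost $\wbo{1/\sqrt{\delta}}$ yields total complexity
\begin{equation}
  \widetilde{O}\left(\frac{\ell^{3/2}}{\eps\sqrt{\delta}}\right) = \widetilde{O}\left(\frac{\log^{3/4}|\Omega|}{\eps\sqrt{\delta}}\right),
\end{equation}
the remaining $\polylog(n)$ factors absorbing the dependence on the Hamiltonian's degree, schedule construction, and median-of-means amplification.

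The principal obstacle is bias control. Off-the-shelf quantum phase or amplitude estimation has bias of order $1/T$, which is much too large when multiplied across $\ell$ factors: any per-factor bias exceeding $\eps/\ell$ would dominate the error and destroy the sublinear scaling. The entire strategy therefore hinges on the nearly unbiased primitives developed earlier in the paper, combined with their nondestructiveness---without the latter, recycling the Szegedy qsample $T$ times would either incur an additional multiplicative $T$ cost per ratio (killing the $\sqrt{\delta}$ speed-up) or force one to forfeit the quadratic variance reduction by re-preparing the state from scratch. Once those two ingredients are in hand, the remaining analysis---concentration across the product, schedule generation, and median amplification---is standard bookkeeping.
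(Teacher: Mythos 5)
Your overall architecture is the paper's: telescoping product over a Chebyshev schedule of length $\ell = \wbo{\sqrt{\log|\Omega|\log n}}$, a nondestructive nearly unbiased mean estimator per factor with $t = \wbo{\sqrt{\ell}/\eps}$ reflections (so per-factor relative bias $O(\eps/\ell)$ and relative standard deviation $O(\eps/\sqrt{\ell})$ — note your ``$\bo{1/\sqrt{\ell}}$'' should read $\eps/\sqrt{\ell}$ to be consistent with your own choice of $T$), and each reflection implemented with $\bo{1/\sqrt{\delta}}$ quantum walk steps, giving $\wbo{\ell^{3/2}/(\eps\sqrt{\delta})}$ in total. This matches the paper's proof of Theorem~\ref{Thm:QSA}, which simply combines the product estimator (Theorem~\ref{Thm:prodEst}) with the schedule generator of \cite{HW20c,AHN22j} and the Szegedy/MNRS reflection.

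There is, however, one genuine gap: your claim that ``the Szegedy quantum walk prepares a qsample $\ket{\pi_{\beta_i}}$ in $\wbo{1/\sqrt{\delta}}$ walk steps'' is false, and it is exactly the point the paper has to work around. The walk gives the \emph{reflection} $\id - 2\proj{\pi_{\beta_i}}$ in $\bo{1/\sqrt{\delta}}$ steps; preparing the Gibbs state at low temperature is the expensive part (preparing $\ket{\pi_{\beta_i}}$ from scratch would require annealing through all preceding temperatures). The paper's resolution is to demand that the cooling schedule be not only $B$-Chebyshev but also $B$-\emph{slowly varying}, i.e.\ $\abs{\ip{\pi_{\beta_i}}{\pi_{\beta_{i+1}}}}^2 \geq 1/B$, to carry $\bo{B}$ copies of the easy-to-prepare state $\ket{\pi_{\beta_0}}$ through the whole run, and to anneal them from each temperature to the next with the Marriott--Watrous-style measurement procedure (Lemma~\ref{Lem:ndAnneal}) at $\bo{B}$ reflections per step — which is only possible because the mean estimator returns the qsample intact. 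This is also why the schedule is taken from \cite{HW20c,AHN22j} rather than from an adaptation of \cite{SVV09j} as you propose: the Chebyshev property alone does not guarantee overlap of consecutive Gibbs states, and the on-the-fly schedule generation of \cite{HW20c} is what contributes the $\wbo{\sqrt{\log|\Omega|}\log^{3/2}n/\sqrt{\delta}}$ term in the stated complexity. Without the slow-variation requirement and the annealing step, your account of how the algorithm ever gets hold of $\ket{\pi_{\beta_i}}$ at low temperature has no justification, even though your final arithmetic happens to coincide with the correct count. (Smaller omissions — the rough median estimate $\medt_i$ and the averaging over $K = \bo{B}$ copies needed before invoking the unbiased estimator — are internal to Theorem~\ref{Thm:prodEst} and acceptable to black-box at this level.)
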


Our result reduces the polynomial dependence on $\log \abs{\Omega}$ by a factor of $1/4$ and it achieves state-of-the-art dependence on the spectral gap~$\delta$ and the accuracy~$\eps$ up to logarithmic factors. We provide a comparison with prior work in Table~\ref{Tab:partition}.

\begin{table}[htbp]
   \aboverulesep=0ex
   \belowrulesep=0ex
   \renewcommand{\arraystretch}{1.25}
  \centering\footnotesize
  \begin{tabular}{l|cc|c}
    \toprule
                     & Schedule generation               & Mean-value estimation & {\bf Total cost} \\ \midrule
      {\scriptsize\cite{DF91c,BSVV08j}} & $0$ (non-adaptive) & $\wbo[\big]{\log^2 \abs{\Omega}/(\eps^2 \delta)}$ & $\wbo[\big]{\log^2 \abs{\Omega}/(\eps^2 \delta)}$ \\
      {\scriptsize\cite{SVV09j,Hub15j,Kol18c}}  & $\wbo[\big]{\log \abs{\Omega}/\delta}$ & $\wbo[\big]{\log \abs{\Omega}/(\eps^2 \delta)}$ & $\wbo[\big]{\log \abs{\Omega}/(\eps^2 \delta)}$ \\ \midrule
      {\scriptsize\cite{WCNA09j}} & Use {\scriptsize\cite{BSVV08j}} & $\wbo[\big]{\log^2 \abs{\Omega}/(\eps\sqrt{\delta})}$ & $\wbo[\big]{\log^2 \abs{\Omega}/(\eps\sqrt{\delta})}$ \\
      {\scriptsize\cite{Mon15j}}  & Use {\scriptsize\cite{SVV09j}} & $\wbo[\big]{\log \abs{\Omega}/(\eps\sqrt{\delta})}$ & $\wbo[\big]{\log \abs{\Omega}(1/\delta + 1/\eps\sqrt{\delta})}$ \\
      {\scriptsize\cite{HW20c,AHN22j}} & $\wbo[\big]{\sqrt{\log \abs{\Omega}/\delta}}$ & Use {\scriptsize\cite{Mon15j}} & $\wbo[\big]{\log \abs{\Omega}/(\eps\sqrt{\delta})}$ \\ \midrule
      {\bf Our work} & Use {\scriptsize\cite{HW20c,AHN22j}} & $\wbo[\big]{\log^{3/4} \abs{\Omega}/(\eps\sqrt{\delta})}$ & $\wbo[\big]{\log^{3/4} \abs{\Omega}/(\eps\sqrt{\delta})}$ \\
    \bottomrule
  \end{tabular}
  \caption{\small Comparison of the complexity (in terms of Markov chain steps) needed to compute partition functions over a state space $\Omega$, where $\delta$ is the spectral gap of the Markov chain and $\eps$ is the accuracy parameter. Here, we omit the polylogarithmic dependencies on the degree $n$ of the partition function since in most applications $n \sim \log \abs{\Omega}, \poly(1/\delta)$. The first two rows are the classical algorithms.}
\label{Tab:partition}
\end{table}

The main technical ingredients of our work are new variants of the quantum phase and amplitude estimation algorithms (Theorems~\ref{Thm:unbiasPhase} and~\ref{Thm:nduAE}) that return (nearly) unbiased estimates and restore the qsamples used in the process with high probability. We extend these properties into new quantum mean estimators described in Section~\ref{Sec:mean}. In particular, the next result is crucial in the development of our improved product estimator (Theorem~\ref{Thm:prodEst}), and may be of independent interest. It shows that the relative error $\eps$ in bias can be decreased \emph{exponentially} quickly. In comparison, the estimate $\mut$ computed in previous works~\cite{Mon15j,HM19c,Ham21c} satisfies $\abs*{\mut - \mu} \leq \bo{\sigma/t}$ with high probability, which incurs a worst-case bias of roughly $\bo{\sigma/t}$.

\begin{rproposition}[Proposition~\ref{Prop:unEstim} {\normalfont (Informal)}]
  For any parameters $t,\eps,\vart$ there is a quantum algorithm such that, given a random variable~$X$ with variance at most $\td{\sigma}$, it computes a mean estimate $\td{\mu}$ with bias $\abs*{\ex*{\mut} - \mu} \leq \eps \vart$ and variance $\var{\mut} \leq \pt[\big]{\frac{\vart}{t}}^2$ by using one copy of a qsample $\ket{\pi_X}$ (restored at the end of the computation) and $\wbo{t\log(1/\eps)^2}$ applications of the reflection $\id - 2\proj{\pi_X}$.
\end{rproposition}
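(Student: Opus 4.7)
The plan is to combine the nondestructive unbiased amplitude estimator of Theorem~\ref{Thm:nduAE} with the dyadic level-set decomposition that underlies the variance-bounded quantum mean estimators of~\cite{HM19c,Ham21c}; the crucial new ingredient is that Theorem~\ref{Thm:nduAE} shrinks the bias \emph{exponentially} in the number of extra queries while preserving the variance-aware scaling of ordinary amplitude estimation. This bias improvement is precisely what the previous $\bo{\vart/t}$-biased estimators of~\cite{Mon15j,HM19c,Ham21c} lacked.

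First I would compute a crude estimate $\mut_0$ with $\abs{\mut_0 - \mu} = \bo{\vart}$ via a single constant-accuracy call to an existing variance-bounded estimator, and then dyadically decompose
\begin{equation*}
  X - \mut_0 \;=\; \sum_{k=0}^{K-1} Y_k \;+\; \text{(tail)}, \qquad Y_k = (X - \mut_0)\ind{2^{k-1}\vart < \abs{X-\mut_0} \leq 2^k\vart},
\end{equation*}
with $K = \lceil \log_2(1/\eps) \rceil$. Chebyshev's inequality applied to $X - \mut_0$ bounds the shell probabilities $a_k \coloneqq \P\pt{\abs{X - \mut_0} \in (2^{k-1}\vart, 2^k\vart]}$ by $\bo{4^{-k}}$, which delivers both the structural estimate $\sum_k \ex{Y_k^2} = \bo{\vart^2}$ and, by Cauchy--Schwarz, a truncation bias for the omitted tail of at most $\bo{\eps \vart}$.

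Next, for each shell $k$ and each sign $s \in \set{+, -}$, I would apply Theorem~\ref{Thm:nduAE} to the $[0, 2^k\vart]$-valued observable $Y_k^s = \max(sY_k, 0)$, reusing the single qsample $\ket{\pi_X}$ across all $2K$ rounds thanks to the restoration guarantee (boosted per round to failure probability $\eps/K$ at a logarithmic overhead). After normalisation to $[0,1]$, $Y_k^s/(2^k\vart)$ has variance at most $a_k$, so the variance-aware amplitude estimator produces an estimate of $\ex{Y_k^s}$ with bias $\eps\vart/K$ and standard deviation $c_k \coloneqq \sqrt{\ex{Y_k^2}}/t$ at a cost of $\wbo{\sqrt{a_k}\cdot 2^k\vart / c_k \cdot \log(K/\eps)} = \wbo{\sqrt{\ex{Y_k^2}}/c_k \cdot \log(K/\eps)} = \wbo{t \log(1/\eps)}$ reflections per shell. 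Summing over the $2K$ rounds yields the announced total of $\wbo{t \log(1/\eps)^2}$ reflections. The returned $\mut \coloneqq \mut_0 + \sum_{k,s}(\pm)\mut_k^s$ has variance $\sum_{k,s} c_k^2 \le \bo{\sum_k \ex{Y_k^2}/t^2} = \bo{\vart^2/t^2}$ and bias at most $2K\cdot(\eps\vart/K) + \bo{\eps\vart} = \bo{\eps\vart}$, with the hidden constants absorbed by rescaling $t$ and $\eps$.

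The main obstacle will be propagating the nondestructive guarantee cleanly through the $2K = \bo{\log(1/\eps)}$ sequential rounds: each invocation of Theorem~\ref{Thm:nduAE} restores $\ket{\pi_X}$ only with high probability and only up to a small state-space error, so a careful union bound together with a conditional-on-success bias/variance analysis is required to ensure that the joint distribution of the $2K$ outputs behaves as if each round acted on a pristine copy of the qsample. A secondary subtlety is that the natural allocation $c_k = \sqrt{\ex{Y_k^2}}/t$ depends on the unknown shell variances; I would sidestep this either by estimating those variances adaptively via the same machinery applied to $X^2$, or by the conservative uniform allocation $c_k = \vart/(t\sqrt K)$, absorbing the resulting $\sqrt K$ factor into the polylogarithm.
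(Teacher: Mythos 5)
Your overall route coincides with the paper's: recenter by a rough estimate, split the recentered variable into geometrically growing shells (the paper uses intervals $(a_{j-1},a_j]$ with $a_j=2^j\vart$ and $a_{-1}=0$), encode each signed, normalized shell contribution as an amplitude, estimate it with the nondestructive unbiased amplitude estimator of Theorem~\ref{Thm:nduAE}, sum up, and charge the tail beyond $\approx\vart/\eps$ to the bias via the second-moment/Chebyshev bound. Two quantitative points do not survive as written, though. First, you set the per-call precision only from the bias budget ($\eps\vart/K$ per shell), but the variance guarantee of Theorem~\ref{Thm:nduAE} is $91p/t'^2+\eps'$ with the precision entering \emph{linearly}; after rescaling by $(2^k\vart)^2$, these additive terms sum to $\sum_k (2^k\vart)^2\cdot\eps/(K2^k)=\Theta(\vart^2/K)$, which swamps the target $(\vart/t)^2$ whenever $t\gg\sqrt{\log(1/\eps)}$. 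The fix is to take the precision quadratically small in $\eps/t$ (the paper uses $\eps'=(\eps/(2320t))^2$), which only costs another logarithmic factor and keeps the $\wbo{t\log(1/\eps)^2}$ reflection count. Relatedly, your closing worry about allocating $c_k=\sqrt{\ex{Y_k^2}}/t$ is moot: since the amplitude-estimation variance is proportional to the amplitude $p$ itself, running \emph{every} shell with the same parameter $t'=\Theta(t)$ already makes the rescaled shell variances telescope against $\sum_k\ex{Y_k^2}=\bo{\vart^2}$; this uniform choice is exactly what the paper does, with no adaptive allocation and no extra $\sqrt{K}$ loss.

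Second, the recentering step is a genuine resource issue. Obtaining $\mut_0$ ``via a single constant-accuracy call to an existing variance-bounded estimator'' is not available in this access model: the estimators of Montanaro and Hamoudi et al.\ prepare and measure many copies of the qsample (their recentering uses classical samples), whereas here you hold a single copy of $\ket{\pi_X}$, may only reflect through it, and must return it. The paper sidesteps this by making the rough centre an \emph{input} $\medt$ with $\abs{\mu-\medt}\leq 17\sigma$, supplied separately by a nondestructive median estimator (Proposition~\ref{Prop:medi}): a binary search over the support in which each comparison is a tail-probability test done with nondestructive amplitude estimation, costing $\bo{\log(n)\log(\log(n)/\eta)}$ reflections. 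If you want the informal statement to be self-contained, you need such a nondestructive recentering subroutine, not an off-the-shelf mean estimator. Two minor further points: your bottom shell starts at $\vart/2$, so the range $\abs{X-\mut_0}\leq\vart/2$ is never estimated (the paper's $a_{-1}=0$ fixes this), and your handling of sequential restoration failures (per-round boosting plus a union bound, conditional bias/variance accounting) matches the paper's treatment.
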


Finally, we provide applications of our work in Section~\ref{Sec:applications}, where we improved upon the best known algorithms for several approximate counting problems that can be phrased as partition functions. These results are described succinctly in Table~\ref{Tab:applications}.

\begin{table}[htbp]
   \aboverulesep=0ex
   \belowrulesep=0ex
   \renewcommand{\arraystretch}{1.3}
  \centering\footnotesize
  \begin{tabular}{l@{\hskip 0.2cm}ccc}
    \toprule
                     & Colorings, Ising model, & \multirow{1.7}{*}{Matchings} & Volume of \\ [-2mm]
                     & Independent sets & & convex body \\ \midrule
      {\scriptsize \cite{SVV09j,CLV21c}} & $\wbo{\abs{V}^2/\eps^2}$ & $\wbo{\abs{V} \abs{E}/\eps^2}$ & \textendash \\ [-1mm]
      {\scriptsize \cite{CV18j,JLLV21c}} & \textendash & \textendash & $\wbo{d^{3.5} + d^3/\eps^2}$ \\ \midrule
      {\scriptsize \cite{HW20c,AHN22j}} & $\wbo{\abs{V}^{1.5}/\eps}$ & $\wbo{\abs{V}\abs{E}^{0.5}/\eps}$ & \textendash \\ [-1mm]
      \cite{CCH+19p} & \textendash & \textendash & $\wbo{d^3 + d^{2.5}/\eps}$ \\ \midrule
      {\bf Our work} & $\wbo{\abs{V}^{1.25}/\eps}$ & $\wbo{\abs{V}^{0.75}\abs{E}^{0.5}/\eps}$ & $\wbo{d^3 + d^{2.25}/\eps}$ \\
    \bottomrule
  \end{tabular}
\caption{\small Comparison of the best known classical (rows 1--2) and quantum (rows 3--5) algorithms for approximate counting problems. Here, $\eps$ is the accuracy parameter, $V$ is the vertex set, $E$ is the edge set and $d$ is the dimension of the convex body. The complexities are expressed in terms of number of random or quantum walk steps (for volume estimation, each step requires one query to a membership oracle).}
\label{Tab:applications}
\end{table}


\subsection{Proof overview}

We give a high-level description of the algorithms developed in this paper. We first explain the main differences with previous works~\cite{WCNA09j,Mon15j,HW20c,AHN22j} that allow us to obtain a sublinear algorithm for estimating partition functions. Next, we present new variants of \emph{phase estimation}, \emph{amplitude estimation} and \emph{quantum mean estimation} that are needed for implementing our approach.

\paragraph*{Estimating partition functions (Section~\ref{Sec:partition}).}
The current bottleneck in quantizations of the classical simulated annealing algorithm of {\v{S}}tefankovi{\v{c}}, Vempala and Vigoda~\cite{SVV09j} lies in estimating the expectation of a Chebyshev cooling schedule for the telescoping product displayed in Equation~(\ref{Eq:telescop}). Abstractly, this problem amounts to estimating with relative error $\eps$ the expectation of the product $X = X_1 \cdots X_{\ell}$ of~$\ell$ independent random variables with bounded relative variance $\var{X_i} \leq \bo{\ex{X_i}^2}$. By Bernoulli's inequality, one can show that estimating the expectation of each inner random variable~$X_i$ with relative error~$\bo{\eps/\ell}$ and taking the product of the estimates is sufficient for that purpose. Montanaro~\cite{Mon15j} developed a quantum algorithm for computing each such inner estimate in time~$\wbo{\ell/\eps}$, quadratically faster than it is possible classically, leading to an overall complexity of $\wbo{\ell^2/\eps}$. Nevertheless, a different \emph{classical} estimator from Dyer and Frieze~\cite{DF91c} provides the same quadratic complexity with respect to the schedule length: replace the $\eps/\ell$-error inner estimates with \emph{unbiased} estimates of variance $\bo{(\eps^2/\ell) \cdot \ex{X_i}^2}$. The latter estimates can simply be obtained by averaging $\bo{\ell/\eps^2}$ samples due to the bounded variance property. The overall variance of this new product estimator is shown to be $\bo{\pt{\eps \ex{X}}^2}$ which, by Chebyshev's inequality and the fact that it is unbiased, implies that the overall relative error is~$\eps$. We follow a similar approach in our work to speed up the estimation of~$\ex{X}$. Our main contribution is a new quantum estimator that reduces the variance quadratically faster than classically, while keeping the estimates (nearly) unbiased. This allows us to compute each inner estimate in time~$\wbo{\sqrt{\ell}/\eps}$, leading to an overall complexity of~$\wbo{\ell^{3/2}/\eps}$. Our estimator requires new unbiased variants of the \emph{phase} and \emph{amplitude estimation} algorithms, which are sketched in the next paragraphs. Additionally, these algorithms are made \emph{nondestructive} (i.e. they restore a copy of their starting state) to ensure the reusability of the \emph{qsamples} which are an expensive resource in our applications.

\paragraph*{Unbiased phase estimation (Section~\ref{Sec:unbiasedPE}).}
The \emph{phase estimation} problem asks the question of estimating the eigenphase $\theta \in \bp{0,1}$ of a quantum state $\ket{\psi}$ satisfying $U \ket{\psi} = e^{2\pi i \theta} \ket{\psi}$ for a given unitary operator~$U$. Kitaev's algorithm~\cite{Kit95p} returns an estimate $\tht$ within distance~$1/t$ from~$\theta$ with constant success probability by using $\bo{t}$ controlled-$U$ operations. By standard success amplification techniques, the variance of this estimate can be guaranteed to be of the order of $1/t^2 + \eps$ at the cost of an $\bo{\log 1/\eps}$ overhead in complexity. One simple idea to make this estimator unbiased is to consider the \emph{phase-shifted} unitary $U_{\varphi} \ket{\psi} = e^{2\pi i (\theta+\varphi)} \ket{\psi}$ for a random~$\varphi$, run phase estimation on it and subtract the shift $\varphi$ from the estimate. Although it provides an unbiased estimate of~$e^{2\pi i \theta}$ (and, therefore, of $\sin^2(\theta)$), we do not know how to reduce its variance without introducing a new bias. We overcome this difficulty by exploiting more properties of the output distribution of phase estimation and looking for an unbiased estimate of~$\theta$ directly. First, we show that it is possible to compute with high probability the \emph{exact} value of the first $\log t$ bits of $\theta + \varphi = 0.b_1b_2\dots$ for a well-chosen phase shift $\varphi$ (steps~1--5 in Algorithm~\ref{Alg:UPE}). This relies on the ability to pinpoint exactly the interval $\bp*{\frac{k}{t},\frac{k+1}{t}}$ for $k \in \itv{0}{t-1}$ containing the phase $\theta + \varphi$, as long as the latter is not too close to an interval endpoint (which is avoided by the choice of~$\varphi$). Secondly, we look for a low-variance unbiased estimate of the remaining bits $0.0^{\log t} b_{\log t +1}b_{\log t +2}\dots$ (steps~6--7 in Algorithm~\ref{Alg:UPE}). For that, we consider the exponentiated unitary $U_{\varphi}^{t}$ which amplifies the eigenphase to $\lambda = 0.b_{\log t +1}b_{\log t +2}\dots$. Using \emph{phase-to-amplitude} conversion techniques, we transform it into a new unitary~$V$ that encodes~$\lambda$ into the amplitude of a new auxiliary quantum register. This register can simply be measured to obtain an unbiased binary estimate $\td{\lambda} \in \{-1,1\}$ of~$\lambda$. The variance of the correction term $\td{\lambda}/t$ added to the previously computed bits $0.b_1b_2\dots b_{\log t}$ is thus of order~$\bo{1/t^2}$. We also have to be careful that~$\lambda$ is bounded away from $0$ and $1$ in order for the conversion techniques to succeed. This is again guaranteed by carefully choosing the phase shift. The final algorithm is summarized in Theorem~\ref{Thm:unbiasPhase}.

\paragraph*{Nondestructive amplitude estimation (Appendix~\ref{App:nondestrAE}).}
The \emph{amplitude estimation} algorithm~\cite{BHMT02j} approximates the measurement probability $p = \norm{\Pi \ket{\psi}}^2$ for a state $\ket{\psi}$ and a projector $\Pi$, given access to the reflection operators $R_{\psi} = \id - 2\proj{\psi}$ and $R_{\Pi} = \id - 2\Pi$. This algorithm relies on the observation that the Grover operator $G = -R_{\psi}R_{\Pi}$ has two conjugate eigenvalues $e^{\pm 2i\arcsin(\sqrt{p})}$ with $\ket{\psi}$ being a superposition over the two corresponding eigenvectors. In its original form~\cite{BHMT02j}, the algorithm consists of running phase estimation on the latter state and applying the function $x \mapsto \sin^2(\pi x)$ on the measured register. Although the post-measurement state is still a superposition over the two above eigenvectors, this process can introduce a relative phase that makes it far from the initial state~$\ket{\psi}$. Harrow and Wei~\cite{HW20c} argued that amplitude estimation can be made \emph{nondestructive} (i.e. it restores the state~$\ket{\psi}$) by using an \emph{uncomputation} trick inspired by the Marriott-Watrous scheme~\cite{MW05j}. We discuss this result in Appendix~\ref{App:nondestrAE} and we provide a new proof that avoids some complications in the analysis.

\paragraph*{Nondestructive unbiased amplitude estimation (Section~\ref{Sec:unbiasedAE}).}
We aim at combining the above two properties in order to obtain a variant of amplitude estimation that is both nondestructive and unbiased. Unfortunately, the properties of the Grover operator do not allow us to simply use the unbiased phase estimation on it. Indeed, it would require the ability to shift its two eigenphases by \emph{opposite} values simultaneously. Moreover, it would return an unbiased estimate of $\arcsin(\sqrt{p})$ instead of~$p$. We overcome these two obstacles by first applying the function $x \mapsto \sin^2(\pi x)$ on the eigenphases of the Grover operator (via amplitude-to-phase conversion techniques~\cite{GAW17p}), which produces a new unitary~$G'$ with ``merged'' eigenvalue~$e^{2\pi ip}$ and eigenvector~$\ket{\psi}$. We next apply the unbiased phase estimation on~$G'$ and~$\ket{\psi}$, which is inherently nondestructive since the latter state is now an eigenvector. Yet, this unbiased estimator poses a new problem: its variance is worse than that of the original amplitude estimation algorithm (roughly, $1/t^2$ vs.\ $p/t^2 + 1/t^4$), which is due to encoding $p$ instead of $\arcsin(\sqrt{p})$ into the phase. Our solution to this problem is to first compute a rough \emph{biased} amplitude estimate $q = \ta{\max\pt{p,1/t^2}}$ (via standard nondestructive amplitude estimation) and to apply the new function $x \mapsto \sin^2(\pi x)/q$ on the eigenphases of the Grover operator (via linear amplitude amplification~\cite{Low17d}). Since the obtained operator will require time $\bo{1/\sqrt{q}}$ to be simulated, we are left with $\bo{\sqrt{q}t}$ steps available to perform unbiased phase estimation on it. The variance is then shown to be of the order of~$p/t^2$ when $p = \bo{1/t^2}$ and~$q/t^2$ otherwise. A side tool of independent interest (needed when $p = \bo{1/t^2}$) is a \emph{nondestructive coin flipping} algorithm (Proposition~\ref{Prop:coin}) for sampling from the Bernoulli variable of parameter~$\norm{\Pi \ket{\psi}}^2$ in constant expected time.

\paragraph*{Nondestructive unbiased mean estimator (Section~\ref{Sec:mean}).}
Our final step is to generalize the amplitude estimation algorithm to estimate the expectation of an arbitrary random variable $X$ whose distribution is encoded into a qsample $\ket{\pi_X} = \sum_x \sqrt{\pr{X = x}} \ket{x}$ (amplitude estimation handles the case of Bernoulli distributions). The estimator shall meet the three requirements of being unbiased, nondestructive and low-variance in order to be applied efficiently to the partition function problems. We build upon the quantum estimators from~\cite{Hei02j,Mon15j,HM19c,Ham21c} that already satisfy the low-variance property. In the latter works, the estimation of~$\ex{X}$ is reduced to estimating the means of a series of Bernoulli distributions whose parameters are determined by certain \emph{quantiles} of the input distribution. We follow the same approach, by instead using the nondestructive unbiased amplitude estimation algorithm in the reduction. This is not enough yet to make the estimator nondestructive and unbiased. First, we modify a recentering step in~\cite{Mon15j,Ham21c} that computed a rough mean estimate by averaging classical samples (obtained from destructive measurements of copies of $\ket{\pi_X}$). We develop a new \emph{median estimator} for this purpose that needs not destroy any copy of $\ket{\pi_X}$ (Proposition~\ref{Prop:medi}). Secondly, the existing estimators require \emph{truncating} the random variable~$X$ by replacing its outcomes larger than a certain threshold value with zero. We suppress most of the truncation-induced bias by adding extra Bernoulli distributions to the reduction. The sum of these added estimates has low variance and is equal (in expectation) to most of the bias introduced by previous approaches. The final estimator is summarized in Proposition~\ref{Prop:unEstim}.

\section{Unbiased and nondestructive quantum subroutines}

  \subsection{Phase estimation}
  \label{Sec:unbiasedPE}
  
In this section, we describe a variant of phase estimation that returns a (nearly) unbiased estimate. We use the next notation for separating the most and least significant bits of a phase $\theta \in \bc{0,1}$.

\begin{definition}
  Let $\theta \in \bc{0,1}$ be a real number with binary expansion $\vp = 0.\vp_1 \vp_2 \vp_3 \dots$ where $\vp_j \in \rn$. Then, for any integer $\tau \geq 0$, we denote $\vp_{\leq \tau} = 0.\vp_1 \vp_2 \dots \vp_{\tau}$ and $\vp_{> \tau} = 0.\vp_{\tau+1}\vp_{\tau+2}\dots$ such that $\vp = \vp_{\leq \tau} + 2^{-\tau} \vp_{> \tau}$.
\end{definition}

Suppose that we can call a controlled unitary $U$, and that we have a copy of an eigenstate~$\ket{\psi}$, satisfying $U\ket{\psi} = e^{2\pi i\theta}\ket{\psi}$. Then, we can implement the \emph{phase-shift} operation $e^{2\pi i \varphi} U \ket{\psi} = e^{2\pi i(\theta + \varphi)}\ket{\psi}$ for any $\varphi$ with one call to $U$, and the \emph{bit-shift} operation $U^{2^{\tau}} \ket{\psi} = e^{2\pi i \theta_{> \tau}}\ket{\psi}$ with~$2^{\tau}$ calls to $U$. We make use of these two operations to design the following unbiased, nondestructive, low-variance phase estimation algorithm.

\begin{algorithm}[H]
  \caption{Unbiased low-variance phase estimation, $\upe(U,\ket{\psi},t,\eps)$.}
  \label{Alg:UPE}
  \begin{algorithmic}[1]
    \For{$k = 0,\dots,8$}
      \State Set $\varphi = k/(16t')$ and $U_{\varphi} = e^{2\pi i \varphi} U$ where $t' = 8t$.
      \State\StateInd{Run \hyperref[Lem:PE]{phase estimation} $N = \ceil{200 \log(72t' /\eps)}$ times on the phase-shifted unitary $U_{\varphi}$ with initial state $\ket{\psi}$ and time $t'$ to collect $N$ independent estimates $\td{\varphi}_1,\dots,\td{\varphi}_{N}$ of $\varphi$. Compute the frequencies~$(\td{p}(i))_{i \in \itv{0}{t'-1}}$ where $\td{p}(i) = \abs{\set{j \in \set{1,\dots,N} : \td{\varphi}_j = i/t'}}$.\vspace{2mm}}
      \If{there is an index $i$ such that $\td{p}(i),\td{p}(i+1) \geq 0.17$ and $(i/t')_{> \tau} \in (4/7,5/7)$}
      \State Stop the for loop and set $\tht_{\varphi} = (i/t')_{\leq \tau}$.
      \EndIf
    \EndFor
    \State Apply \hyperref[Lem:PrO]{phase-to-amplitude conversion} on the exponentiated unitary $U_{\varphi}^t$ (for the last value of~$\varphi$ computed in step 2) with precision $\eps/4$ to obtain a unitary $V$ such that
      \[V \pt[\big]{\ket{\psi}\ket{0}^{\otimes a}} = \ket{\psi} \pt[\big]{\sqrt{1-p'}\ket{0}^{\otimes a} + \sqrt{p'}\ket{\Phi^{\perp}}}.\]
    \State Compute the state $V \pt[\big]{\ket{\psi}\ket{0}^{\otimes a}}$ and measure its last $a$ qubits in the standard basis. If the result is the all-zero string then set $b = 0$ else set $b = 1/(2\pi)$.
    \State Output $\tht = \tht_{\varphi} + b/t - \varphi$.
  \end{algorithmic}
\end{algorithm}

\begin{theorem}[\sc Unbiased low-variance phase estimation]
  \label{Thm:unbiasPhase}
  Let $U$ be a unitary operator with an eigenvector $\ket{\psi}$ such that $U \ket{\psi} = e^{2\pi i \theta} \ket{\psi}$ where $\theta \in \bc{0,1/2}$. Given $t = 2^{\tau} \geq 1$ and $\eps \in (0,1)$, the \emph{unbiased low-variance phase estimation} algorithm $\upe(U,\ket{\psi},t,\eps)$ (Algorithm~\ref{Alg:UPE}) outputs an estimate $\tht \in \bc{-1,1}$ such that
    \[\abs*{\ex{\tht} - \theta} \leq \eps \quad \text{and} \quad \var{\tht} \leq \frac{1}{t^2} + \eps.\]
  The algorithm needs one copy of $\ket{\psi}$, which is restored at the end of the computation with probability at least $1-\eps$, and $\bo{t \log(t/\eps)}$ controlled-$U$ operations.
\end{theorem}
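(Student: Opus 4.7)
The plan is to split Algorithm~\ref{Alg:UPE} into two phases and analyze them separately. Steps 1--5 aim to recover the $\tau$ most significant bits $(\theta+\varphi)_{\leq\tau}$ \emph{exactly} for a well-chosen shift $\varphi$ among the nine candidates $\varphi = k/(16t')$; steps 6--7 then produce a low-variance, essentially unbiased estimate of the residual $2^{-\tau}(\theta+\varphi)_{>\tau}$ via phase-to-amplitude conversion on the exponentiated unitary $U_\varphi^t$, whose eigenphase on $\ket{\psi}$ is exactly $(\theta+\varphi)_{>\tau}$. Combining the two pieces and subtracting $\varphi$ recovers $\theta$, and the boundedness $\tht\in[-1,1]$ allows low-probability failure events to be absorbed into the $\eps$ slack.

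For the first phase, I would exploit the standard concentration of Kitaev phase estimation: on an eigenstate with phase $\theta+\varphi$ and precision $t'$, the outputs $i/t'$ and $(i+1)/t'$ bracketing $\theta+\varphi$ collect total probability at least $8/\pi^2$, and each individual probability exceeds a universal positive constant (around $0.25$) whenever the relative position $t'(\theta+\varphi)-i$ stays bounded away from $\{0,1\}$. Since the nine shifts sweep $\theta+\varphi$ by exactly half a bin width $1/(2t')$, a pigeonhole argument forces some shift $\varphi$ to place $\theta+\varphi$ simultaneously ``centrally'' inside its bin and in a bin whose index $i$ satisfies $(i/t')_{>\tau}\in(4/7,5/7)$; recalling $t'=8t$, this last constraint amounts to $i\bmod 8 = 5$. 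A Chernoff bound on the $N=\bo{\log(t'/\eps)}$ independent runs, union-bounded over all nine shifts and all bin indices, then ensures the empirical frequencies $\td p(i),\td p(i+1)$ approximate the true ones to within $\pm 0.05$ simultaneously, so that the threshold $0.17$ fires at this ``good'' shift. A symmetric argument rules out spurious triggers at earlier $\varphi$: the geometric constraint $(i/t')_{>\tau}\in(4/7,5/7)$ together with the $1/(16t')$ shift spacing restricts the candidate bins to a single residue class, while the threshold $0.17$ excludes empirically noisy bins far from $\theta+\varphi$. Conditional on this success event (probability $\geq 1-\eps/2$), $\tht_\varphi = (\theta+\varphi)_{\leq\tau}$ exactly, and the residual $\lambda := (\theta+\varphi)_{>\tau}$ lies in $[5/8,3/4)$, comfortably away from the boundary of the circle.

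For the second phase, since $U_\varphi^t\ket{\psi} = e^{2\pi i\lambda}\ket{\psi}$, phase-to-amplitude conversion produces a unitary $V$ that leaves $\ket{\psi}$ intact on the main register while encoding a target function of $\lambda$ into the flag amplitude with precision $\eps/4$. The scaling in step~7 is chosen so that $\ex{b/t} = \lambda/t$ up to an additive $\bo{\eps}$ bias inherited from the conversion accuracy; since the identity $\tht_\varphi - \varphi + \lambda/t = \theta$ holds by construction of $\tht_\varphi$, I get $\ex{\tht} = \theta$ up to $\bo{\eps}$ conditional on phase~1 succeeding. Using $\abs{\tht}\leq 1$ to absorb the phase-1 failure event then yields $\abs{\ex{\tht}-\theta}\leq\eps$ unconditionally, after adjusting constants in $N$. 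The variance bound follows because $b/t$ is bounded by $1/t$, so $\var{\tht}\leq 1/t^2$ conditional on success, and the rare failure event contributes at most an additive $\bo{\eps}$ by boundedness. Nondestructiveness is automatic: every standard phase estimation run leaves $\ket{\psi}$ untouched on its main register (it is an eigenvector), and the conversion unitary preserves $\ket{\psi}$ by construction; hence $\ket{\psi}$ is restored with probability at least $1-\eps$. The query complexity is $9N = \bo{\log(t/\eps)}$ PE runs at cost $\bo{t'}=\bo{t}$ each, plus one conversion at cost $\bo{t\log(1/\eps)}$, totalling $\bo{t\log(t/\eps)}$.

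The delicate part of the proof will be the step-4 triggering analysis: I must show that the frequency threshold $0.17$ combined with the geometric constraint $(i/t')_{>\tau}\in(4/7,5/7)$ uniquely pins down the correct bin pair straddling $\theta+\varphi$, with no false positives from other candidate shifts that would produce a $\tht_\varphi$ disagreeing with the true high bits. The natural way forward is a case analysis on the relative position of $\theta+\varphi$ within its bin for each of the nine shifts, using Chernoff tail bounds to control empirical deviations, and verifying that the carefully chosen constants $(0.17, 0.05, 4/7, 5/7)$ leave enough slack for both triggering on the good $\varphi$ and uniqueness across all nine candidates to hold simultaneously with the stated probability.
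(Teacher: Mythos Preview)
Your proposal is correct and follows essentially the same two-phase decomposition as the paper: recover the high bits exactly via repeated phase estimation with a suitable shift, then estimate the residual with a single phase-to-amplitude measurement, and combine via the laws of total expectation and variance using the boundedness $|\tht|\leq 1$.

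The one place where the paper is cleaner than what you sketch is precisely your ``delicate part.'' You anticipate a case analysis across the nine shifts to rule out spurious triggers, but this is unnecessary. The paper isolates a single probability gap (its Corollary~\ref{Cor:phaseProba}): $p(i)\geq 0.22$ when $\Delta_i\leq 5/(8t')$ and $p(i)\leq 0.11$ when $\Delta_i\geq 1/t'$. Once Hoeffding gives $\|\td p - p\|_\infty\leq 0.05$ uniformly, \emph{any} trigger $\td p(i),\td p(i+1)\geq 0.17$ forces $p(i),p(i+1)>0.11$, hence $\Delta_i,\Delta_{i+1}<1/t'$, which pins down $i=i^\star$ uniquely. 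Combined with the constraint $(i/t')_{>\tau}\in(4/7,5/7)$ (equivalently $i^\star\equiv 5\pmod 8$), this already yields $\tht_\varphi=(\theta+\varphi)_{\leq\tau}$ and $(\theta+\varphi)_{>\tau}\in[5/8,3/4)$ at whatever shift $\varphi$ fires first. So there is no separate ``no false positives'' argument; the pigeonhole step is only needed to guarantee that \emph{some} shift does trigger. Your proposed case analysis would work too, but the gap lemma collapses it to a one-liner.
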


\begin{proof}
  We decompose the analysis of Algorithm~\ref{Alg:UPE} into three parts. First, we show that the two phases $\varphi \in (0,1/t)$ and $\tht_{\varphi} \in (0,1)$ obtained at the end of steps 1--5 satisfy the equations
    \begin{equation}
      \label{Eq:EPE}
      \tht_{\varphi} = (\theta + \varphi)_{\leq \tau} \qquad \text{and} \qquad (\theta + \varphi)_{> \tau} \in (1/2,3/4)
     \end{equation}
  with probability at least $1-\eps/8$. We refer to this part as \emph{exact phase estimation} since $\tht_{\varphi}$ is exactly equal to the first $\tau$ bits of the shifted phase $\theta + \varphi$. Next, conditioned on any pair $(\varphi,\tht_{\varphi})$ satisfying the above equation, we prove that the estimate $b$ obtained at the end of steps 6--7 satisfies
    \begin{equation}
      \label{Eq:UPE}
      \abs{\ex{b \given (\varphi,\tht_{\varphi})} - (\theta + \varphi)_{> \tau}} \leq \eps/2 \qquad \text{and} \qquad \var{b \given (\varphi,\tht_{\varphi})} \leq 1.
     \end{equation}
  We refer to this part as \emph{unbiased phase estimation}. Finally, we analyze the expectation and variance of the final estimate constructed at step 8.

  \textit{Exact phase estimation (steps 1--5).} For each phase shift $\varphi$ and integer $i \in \itv{0}{t'-1}$, let $\eps^{\varphi}_i$ denote the distance between the angles $\theta + \varphi$ and $i/t'$ along the unit circle. The two smallest distances are achieved by $\eps^{\varphi}_{i^{\star}},\eps^{\varphi}_{i^{\star}+1} \leq 1/t'$ where $i^{\star} = t' (\theta + \varphi)_{\leq 3\tau}$, whereas the other indices $j \notin \set{i^{\star},i^{\star}+1}$ are at distance at least $\eps^{\varphi}_j > 1/t'$. By Hoeffding's inequality and a union bound over the $t'$ possible indices, at each round of step~3 the empirical frequency vector $\td{p}$ is at distance $\norm{p - \td{p}}_{\infty} \leq 0.05$ from the probability vector~$p$ with probability at least $1-\eps/72$. Since step~3 is executed at most 9 times, all of the frequency vectors computed by the algorithm satisfy this bound with probability at least $1-\eps/8$. On the other hand, there is at least one value of~$\varphi$ (among the 9 possible ones) such that $\eps^{\varphi}_{i^{\star}},\eps^{\varphi}_{i^{\star}+1} \leq 5/(8t')$ and $(\theta + \varphi)_{> \tau} \in (1/2,3/4)$. The algorithm will correctly detect such a value because of the gaps proved in Corollary~\ref{Cor:phaseProba} between the probabilities $p(i^{\star}),p(i^{\star}+1)$ and the probabilities $p(j)$ for $j \notin \set{i^{\star},i^{\star}+1}$. This proves Equation~(\ref{Eq:EPE}).

  \textit{Unbiased phase estimation (steps 6--7).} The state $\ket{\psi}$ is an eigenvector of $U^t_{\varphi}$ such that $U^t_{\varphi} \ket{\psi} = e^{2\pi i (\theta + \varphi)_{> \tau}} \ket{\psi}$. If the pair $(\varphi,\tht_{\varphi})$ obtained at the end of steps 1--6 satisfies Equation~(\ref{Eq:EPE}) then, by Lemma~\ref{Lem:PrO}, the amplitude~$\sqrt{p'}$ obtained at step~6 is at distance at most $\abs{\sqrt{p'} - \sqrt{2\pi (\theta + \varphi)_{> \tau}}} \leq \eps/4$ from the shifted phase. Thus, conditioned on any fixed correct pair $(\varphi,\tht_{\varphi})$, we get $\abs{\ex{b \given (\varphi,\tht_{\varphi})} - (\theta + \varphi)_{> \tau}} = \abs{p' - (\theta + \varphi)_{> \tau}} \leq \eps/2$ and $\var{b \given (\varphi,\tht_{\varphi})} \leq 1$ for the estimate~$b$ computed at step 7. This proves Equation~(\ref{Eq:UPE}).

  \textit{Low-variance (step 8).} The expectation and variance of the final estimate $\tht$ are $\abs{\ex{\tht \given (\varphi,\tht_{\varphi})} - \theta} \leq \eps/2$ and $\var{\tht \given (\varphi,\tht_{\varphi})} \leq 1/t^2$ conditioned on any pair $(\varphi,\tht_{\varphi})$ satisfying Equation~(\ref{Eq:EPE}). Since $\abs{\theta} \leq 1$ and Equation~(\ref{Eq:EPE}) holds with probability at least $1-\eps/8$, we obtain by the law of total expectation that $\abs{\ex{\tht} - \theta} \leq \eps/2 + \max\abs{\tht - \theta} \cdot \eps/8 \leq 3\eps/4$ and by the law of total variance that $\var{\tht} = \ex{\var{\tht \given (\varphi,\tht_{\varphi})}} + \var{\ex{\tht \given (\varphi,\tht_{\varphi})}} \leq 1/t^2 + \eps/8 + \ex{\abs{\ex{\tht \given (\varphi,\tht_{\varphi})} - \theta}^2} \leq 1/t^2 + \eps/8 + (\eps/2)^2 + 4\cdot\eps/8 \leq 1/t^2 + 7\eps/8$. The number of controlled-$U$ operations used by the algorithm is $\bo{t \log(t/\eps)}$ at steps 1--5 and $\bo{t \log(1/\eps)}$ at steps 6--7. Moreover, the state $\ket{\psi}$ is preserved by the phase estimation and the phase-to-amplitude conversion algorithms.
\end{proof}

  \subsection{Amplitude estimation}
  \label{Sec:unbiasedAE}
  In this section, we describe a variant of amplitude estimation that returns a (nearly) unbiased estimate and restores the initial state with high probability. We start with a simple procedure, adapted from the Marriott-Watrous scheme~\cite{MW05j}, to sample nondestructively from a Bernoulli variable of parameter $\norm{\Pi \ket{\psi}}^2$ given a projector $\Pi$ and a single copy of a state $\ket{\psi}$.

\begin{algorithm}[H]
	\caption{Nondestructive coin flip}
  \label{Alg:coin}
	\begin{algorithmic}[1]
		\State Measure the state $\ket{\psi}$ according to the projectors $\set{\Pi, \id-\Pi}$. Set $b = 1$ if the result is $\Pi \ket{\psi}$ and $b = 0$ if it is $(\id-\Pi) \ket{\psi}$.
    \State Measure the residual state according to the projectors $\set{\proj{\psi}, \id-\proj{\psi}}$.
		\While{the residual state is not $\ket{\psi}$}
      \State Measure the residual state according to the projectors $\set{\Pi, \id-\Pi}$.
		  \State Measure the residual state according to the projectors $\set{\proj{\psi}, \id-\proj{\psi}}$.
		\EndWhile
		\State Output $b$ and $\ket{\psi}$.
	\end{algorithmic}
\end{algorithm}

\begin{proposition}[\sc Nondestructive coin flip]
  \label{Prop:coin}
  Let $\ket{\psi}$ be a quantum state and~$\Pi$ be a projection operator. The \emph{nondestructive coin flip} algorithm (Algorithm~\ref{Alg:coin}) outputs a Boolean value $b \in \rn$ such that $\ex{b} = \norm{\Pi \ket{\psi}}^2$.
  The algorithm needs one copy of $\ket{\psi}$, which is restored at the end of the computation, and $\bo{1}$ applications in expectation of the reflection operators $\id - 2\proj{\psi}$ and $\id - 2\Pi$.
\end{proposition}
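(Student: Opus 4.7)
My plan is to analyze Algorithm~\ref{Alg:coin} by restricting attention to the two-dimensional subspace that the state actually explores. First, the Boolean $b$ is fixed by the initial measurement in Step~1 and is never altered afterwards, so the Born rule immediately gives $\ex{b} = \norm{\Pi\ket{\psi}}^2$. The fact that the returned state equals $\ket{\psi}$ is built into the exit condition of the while loop, so the remaining tasks are to verify almost-sure termination and to bound the expected number of reflection calls.

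Set $p = \norm{\Pi\ket{\psi}}^2$ and assume $0 < p < 1$; the boundary cases $p\in\set{0,1}$ are immediate since both measurements then leave $\ket{\psi}$ invariant. I introduce the orthonormal pair $\ket{u} = \Pi\ket{\psi}/\sqrt{p}$ and $\ket{v} = (\id-\Pi)\ket{\psi}/\sqrt{1-p}$, so that $\ket{\psi} = \sqrt{p}\,\ket{u} + \sqrt{1-p}\,\ket{v}$, and define $\ket{\psi^\perp} = \sqrt{1-p}\,\ket{u} - \sqrt{p}\,\ket{v}$. A short induction on the number of measurements performed shows that the residual state always lies in $\set{\ket{u},\ket{v},\ket{\psi},\ket{\psi^\perp}}$: the measurement $\set{\Pi,\id-\Pi}$ sends $\ket{\psi}$ and $\ket{\psi^\perp}$ into $\set{\ket{u},\ket{v}}$, and the measurement $\set{\proj{\psi},\id-\proj{\psi}}$ does the converse. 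All of the algorithm's dynamics therefore live inside this 2D span.

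From the inner products $\abs{\ip{\psi}{u}}^2 = p$, $\abs{\ip{\psi}{v}}^2 = 1-p$, $\abs{\ip{\psi^\perp}{u}}^2 = 1-p$, and $\abs{\ip{\psi^\perp}{v}}^2 = p$, I read off the outer-iteration transition probabilities: starting from $\ket{\psi}$, Steps~1--2 return to $\ket{\psi}$ with probability $p^2 + (1-p)^2 = 1-2p(1-p)$ and otherwise land on $\ket{\psi^\perp}$, while starting from $\ket{\psi^\perp}$, Steps~4--5 restore $\ket{\psi}$ with probability $2p(1-p)$ and otherwise stay at $\ket{\psi^\perp}$. The loop therefore terminates almost surely, and summing the resulting geometric series yields expected outer-iteration count $(1-2p(1-p)) + 2p(1-p) \cdot \pt{1 + \tfrac{1}{2p(1-p)}} = 2$. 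Since each projective measurement can be simulated with a single controlled call to the corresponding reflection via a standard Hadamard-test ancilla, the total expected number of applications of $\id - 2\proj{\psi}$ and $\id - 2\Pi$ is $\bo{1}$.

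The only real obstacle is making the 2D-subspace invariance claim precise and keeping track of the correct post-measurement normalizations on both sides; after that, the Markov-chain computation is routine.
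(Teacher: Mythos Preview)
Your proposal is correct and follows essentially the same approach as the paper: the paper invokes the ``standard Marriott--Watrous analysis'' to assert the same transition probabilities $1-r$ and $r$ with $r = 2p(1-p)$ that you derive explicitly from the two-dimensional span of $\ket{u}$ and $\ket{v}$, and it computes the expected number of while-loop iterations as $r^2\sum_{i\geq 0}(i+1)(1-r)^i = 1$, matching your total of $2$ measurement pairs. Your added remark that each projective measurement is realized by a single controlled reflection via a Hadamard-test ancilla is a detail the paper leaves implicit, but is the natural justification for phrasing the cost in terms of reflections.
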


\begin{proof}
  Let $p = \norm{\Pi \ket{\psi}}^2$ denote the probability to be estimated and define $r = 2p(1-p)$. By the standard Marriott-Watrous analysis~\cite{MW05j}, the residual state at the end of step 2 is $\ket{\psi}$ with probability $1-r$ (in which case steps 3--5 need not be executed) and the probability that one iteration of steps 4--5 succeeds in restoring $\ket{\psi}$ is $r$. Thus, the expected number of iterations needed to restore $\ket{\psi}$ is $r^2 \sum_{i = 0}^{\infty} (i+1) (1-r)^i = 1$.
\end{proof}

We now extend the above algorithm to a low-variance unbiased estimator.

\begin{algorithm}[H]
	\caption{Nondestructive unbiased amplitude estimation, $\nduae(\ket{\psi},\Pi,t,\eps)$.}
  \label{Alg:nduAE}
	\begin{algorithmic}[1]
    \State Run \hyperref[Prop:ndAE]{nondestructive amplitude estimation} $\ndae(\ket{\psi},\Pi,2t,\eps/8)$ on $\ket{\psi}$ with projector~$\Pi$ to obtain a (biased) amplitude estimate $q$ of $p = \norm{\Pi \ket{\psi}}^2$.
		\State Let $\tau = \max\set*{1,\frac{1}{4}\min\set{t,1/\sqrt{q}}}$. Run \hyperref[Lem:AA-SVT]{linear amplitude amplification} on $\ket{\psi}$ with projector~$\Pi$, time $\tau$ and precision $\eps/40$ to obtain a unitary $V_{\tau,\eps/40}$ preparing a state $\ket{\psi'} = V_{\tau,\eps/40} \ket{\psi}$ such that
      \[\abs[\big]{\norm{\Pi \ket{\psi'}} - \tau \norm{\Pi \ket{\psi}}} \leq \eps/40.\]
		\If{$q \leq 1/t^2$}
      \State\StateInd{Measure the state $\ket{\psi'}$ according to the projectors $\set{\Pi, \id-\Pi}$ by using the \hyperref[Prop:coin]{nondestructive coin flip} algorithm. Let $b \in \rn$ denote the Boolean value it returns.}
      \State Output $\td{p} = b/\tau^2$.
		\Else
      \State\StateInd{Run \hyperref[Lem:PO]{amplitude-to-phase conversion} on the unitary $V_{\tau,\eps/40}$ with precision $\eps' = \om[\big]{\frac{\tau^2}{t^2\log(t/(\eps\tau))}}$ to obtain a unitary $\pora_{\eps'}$ such that}
        \[\norm{\pora_{\eps'} \pt[\big]{\ket{\psi'}\ket{0}^{\otimes a}} - e^{i \norm{\Pi \ket{\psi'}}^2}\ket{\psi'}\ket{0}^{\otimes a}} \leq \eps'.\]
      \State\StateInd{Run the \hyperref[Thm:unbiasPhase]{unbiased phase estimation} algorithm $\upe(\pora_{\eps'},\ket{\psi'}\ket{0}^{\otimes a},7t/\tau,\eps/90)$ on~$\pora_{\eps'}$ with (approximate) eigenstate $\ket{\psi'}\ket{0}^{\otimes a}$ to compute an eigenphase estimate $\td{p}\,'$.}
      \State Output $\td{p} = 2\pi \td{p}\,' / \tau^2$.
    \EndIf
	\end{algorithmic}
\end{algorithm}

\begin{theorem}[\sc Nondestructive unbiased amplitude estimation]
  \label{Thm:nduAE}
  Let $\ket{\psi}$ be a quantum state and~$\Pi$ be a projection operator with $p = \norm{\Pi \ket{\psi}}^2$. Given $t \geq 4$ and $\eps \in (0,1)$, the \emph{nondestructive unbiased amplitude estimation} algorithm $\nduae(\ket{\psi},\Pi,t,\eps)$ (Algorithm~\ref{Alg:nduAE}) outputs an estimate $\td{p} \in [-2\pi,2\pi]$ such that
    \[\abs*{\ex*{\td{p}} - p} \leq \eps \quad \text{and} \quad \var{\td{p}} \leq \frac{91p}{t^2} + \eps.\]
  The algorithm needs one copy of $\ket{\psi}$, which is restored at the end of the computation with probability at least $1-\eps$, and $\bo{t \log\log(t) \log(t/\eps)}$ applications in expectation of the reflection operators $\id - 2\proj{\psi}$ and $\id - 2\Pi$.
\end{theorem}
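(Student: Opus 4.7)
The plan is to analyze Algorithm~\ref{Alg:nduAE} by case-splitting on the rough estimate $q$ from step~1, and to propagate bias and variance guarantees through the scaling by $\tau$. First I would condition on the high-probability event that $\ndae$ returns an estimate $q$ within $\bo{\sqrt{p}/t + 1/t^2}$ of $p$ and restores $\ket{\psi}$; the failure probability $\bo{\eps}$ contributes only $\bo{\eps}$ to bias and variance since $\abs{\td{p}} \leq 2\pi$ always. On this good event, $\tau$ lies in $[1, t/4]$ and satisfies $\tau \sqrt{p} = \bo{1}$, so Lemma~\ref{Lem:AA-SVT} yields $\norm{\Pi \ket{\psi'}}^2 = \tau^2 p + \bo{\eps/\tau^2 + \eps \sqrt{p}/\tau}$ bounded by a constant strictly less than $1$.

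In the first branch ($q \leq 1/t^2$), we have $\tau = t/4$, and the nondestructive coin flip (Proposition~\ref{Prop:coin}) returns a Bernoulli~$b$ with mean exactly $\norm{\Pi \ket{\psi'}}^2$. Hence $\td{p} = b/\tau^2$ satisfies $\abs{\ex{\td{p}} - p} = \abs{\norm{\Pi \ket{\psi'}}^2/\tau^2 - p} = \bo{\eps}$ and $\var{\td{p}} = \var{b}/\tau^4 \leq \ex{b}/\tau^4 \leq p/\tau^2 + \bo{\eps} = \bo{p/t^2 + \eps}$, matching the target.

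In the second branch ($q > 1/t^2$), amplitude-to-phase conversion (Lemma~\ref{Lem:PO}) produces $\pora_{\eps'}$ for which $\ket{\psi'}\ket{0}^{\otimes a}$ is an $\eps'$-approximate eigenvector with eigenphase $\theta = \norm{\Pi \ket{\psi'}}^2/(2\pi) \in [0, 1/2]$. Applying Theorem~\ref{Thm:unbiasPhase} with time $7t/\tau$ and precision $\eps/90$ returns $\td{p}\,'$ with $\abs{\ex{\td{p}\,'} - \theta} \leq \eps/90 + \bo{\eps'}$ and $\var{\td{p}\,'} \leq \tau^2/(49 t^2) + \eps/90$; the $\eps'$ corrections enter because $\upe$ acts on a nearby rather than exact eigenvector, and the choice $\eps' = \om{\tau^2/(t^2 \log(t/(\eps\tau)))}$ absorbs them. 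Rescaling $\td{p} = 2\pi \td{p}\,'/\tau^2$ and using $1/\tau^2 = \bo{q} = \bo{p + 1/t^2}$ yields $\abs{\ex{\td{p}} - p} = \bo{\eps}$ and $\var{\td{p}} \leq 4\pi^2/(49 \tau^2 t^2) + \bo{\eps} = \bo{p/t^2 + \eps}$.

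Combining the two branches via the laws of total expectation and variance, together with the $\bo{\eps}$ contribution from the failure event, delivers the claimed bounds. For the cost, step~1 uses $\wbo{t}$ reflections, step~2 uses $\bo{\tau \log(1/\eps)}$, and the phase-estimation branch invokes $\pora_{\eps'}$ at most $\bo{(t/\tau)\log(t/\eps)}$ times, each call costing the cost of $V_{\tau,\eps/40}$ times the $\log\log$-factor overhead of Lemma~\ref{Lem:PO}, giving the $\bo{t \log\log(t) \log(t/\eps)}$ total. A union bound over the restoration guarantees of the subroutines yields restoration of $\ket{\psi}$ with probability $\geq 1-\eps$. The main obstacle will be tight bookkeeping of three compounding error sources — the absolute error of $q$, the $\eps/40$ precision of linear amplitude amplification, and the $\eps'$ error of amplitude-to-phase conversion — so that the constant $91$ in the variance bound is reached and so that the preconditions of Lemma~\ref{Lem:PO} and Theorem~\ref{Thm:unbiasPhase} (in particular $\theta \in [0,1/2]$ and the amplified squared amplitude staying bounded away from $0$ and $1$) hold in every branch and edge case.
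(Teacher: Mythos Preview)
Your outline matches the paper's proof closely: condition on the success of step~1, case-split on $q \lessgtr 1/t^2$, and combine via the laws of total expectation and variance. The one place where your argument has a genuine gap is the treatment of the $\eps'$ error in the second branch. You assert that running $\upe$ on an $\eps'$-approximate eigenvector perturbs the bias by $\bo{\eps'}$, but Theorem~\ref{Thm:unbiasPhase} is stated only for exact eigenvectors, and the error actually compounds over the $T = \bo{(t/\tau)\log(t/(\eps\tau))}$ controlled applications of $\pora_{\eps'}$ performed inside $\upe$. The paper handles this not by treating $\ket{\psi'}\ket{0}^{\otimes a}$ as an approximate eigenvector of $\pora_{\eps'}$, but by first replacing $\pora_{\eps'}$ with a nearby unitary $\pora$ (with $\norm{\pora - \pora_{\eps'}} \leq \eps'$) for which $\ket{\psi'}\ket{0}^{\otimes a}$ is an \emph{exact} eigenvector, applying Theorem~\ref{Thm:unbiasPhase} to $\pora$, and then using that the output of Algorithm~\ref{Alg:UPE} takes only $\bo{t/\tau}$ distinct values in $[-1,1]$ to convert the $\ell_\infty$-closeness $\bo{T\eps'}$ of the two output distributions into a bias and variance perturbation of order $T\eps' \cdot t/\tau$. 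This extra $t/\tau$ factor from the bounded support, together with the factor $T$, is exactly what forces the scaling $\eps' = \Theta\bigl(\tfrac{\tau^2}{t^2\log(t/(\eps\tau))}\bigr)$ specified in step~7; your conclusion that this choice ``absorbs'' the corrections is correct, but the mechanism is not a bare $\bo{\eps'}$ perturbation as you wrote.
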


\begin{proof}
  By a simple application of Proposition~\ref{Prop:ndAE}, with probability at least $1-\eps/8$, the amplitude estimate $q$ computed at step~1 satisfies
    \[\td{q} = 0 \ \text{if $p \leq 1/(4t^2)$,} \qquad q \leq 7p \ \text{if $p \geq 1/(4t^2)$,} \qquad q \geq p/4 \ \text{if $q > 1/t^2$.}\]
  We analyze the rest of the algorithm conditioned on the above inequalities being true. Similarly to the proof of Theorem~\ref{Thm:unbiasPhase}, by the laws of total expectation and total variance, when removing the condition the expectation and variance of the estimate $\td{p}$ will differ respectively by at most $\eps/4$ and $7\eps/8$ from the values computed below.

  The amplification parameter $\tau$ at step~2 satisfies $\tau\norm{\Pi \ket{\psi}} = \tau \sqrt{p} \leq 1/2$ when $\tau > 1$, as required by Lemma~\ref{Lem:AA-SVT}. Thus, the norm of the post-amplification state $\ket{\psi'}$ projected along $\Pi$ is
    \begin{equation}
      \label{Eq:linamp}
      \abs[\big]{\norm{\Pi \ket{\psi'}}^2 - \tau^2 p} = \abs[\big]{\norm{\Pi \ket{\psi'}} + \tau \norm{\Pi \ket{\psi}}} \cdot \abs[\big]{\norm{\Pi \ket{\psi'}} - \tau \norm{\Pi \ket{\psi}}} \leq \eps/20.
    \end{equation}
  We split the analysis of the subsequent steps of Algorithm~\ref{Alg:nduAE} into two cases depending on whether the biased estimate $q$ is smaller or larger than $1/t^2$.

  \textit{If $q \leq 1/t^2$ (steps 3--5).} We have $\tau = t/4$ and, by Proposition~\ref{Prop:coin}, the Boolean value $b$ at step~4 is sampled from a Bernoulli distribution of parameter $\ex{b} = \norm{\Pi \ket{\psi'}}^2$. Thus, the expectation of the output estimate satisfies $\abs*{\ex*{\td{p}} - p} = \abs[\big]{\pt{\norm{\Pi \ket{\psi'}}/\tau}^2 - p} \leq \eps/(20\tau^2) \leq \eps/2$ and its variance is $\var{\td{p}} \leq \ex{\td{p}^2} = \norm{\Pi \ket{\psi'}}^2/\tau^4 \leq (\tau^2 p + \eps/20)/\tau^4 \leq p/\tau^2 + \eps/(20\tau^4) \leq 16p/t^2 + \eps/8$.

  \textit{If $q > 1/t^2$ (steps 6--9).} We have $\tau = \max\set*{1,1/(4\sqrt{q})}$. We start by analysing the estimate obtained when the unitary $\pora_{\eps'}$ used in phase estimation (step~8) is replaced with a unitary~$\pora$ that perfectly satisfies
    \[\pora \pt[\big]{\ket{\psi'}\ket{0}^{\otimes a}} = e^{i \norm{\Pi \ket{\psi'}}^2}\ket{\psi'}\ket{0}^{\otimes a}\]
  and is at distance at most $\norm{\pora - \pora_{\eps'}} \leq \eps'$ from $\pora_{\eps'}$ (such a unitary exists since $\norm{\pora_{\eps'} \pt[\big]{\ket{\psi'}\ket{0}^{\otimes a}} - e^{i \norm{\Pi \ket{\psi'}}^2}\ket{\psi'}\ket{0}^{\otimes a}} \leq \eps'$ by Lemma~\ref{Lem:PO}). We rename the phase and output estimates computed at steps 8 and 9 to $\ha{p}\,'$ and $\ha{p}$ respectively in this case. Observe that the eigenphase $\norm{\Pi \ket{\psi'}}^2/(2\pi)$ satisfies the boundedness requirement of Theorem~\ref{Thm:unbiasPhase} since it is at most $(\tau^2 p + \eps/20)/(2\pi) \leq 1/2$ by Equation~(\ref{Eq:linamp}) and $\tau = \max\set*{1,1/(4\sqrt{q})} \leq \max\set*{1,1/(2\sqrt{p})}$. Thus, the expectation of the estimate $\ha{p}\,'$ computed at step~8 satisfies $\abs*{\ex*{2\pi \ha{p}\,'} - \tau^2 p} \leq \abs[\big]{\ex*{2\pi \ha{p}\,'} - \norm{\Pi \ket{\psi'}}^2} + \abs*{\norm{\Pi \ket{\psi'}}^2 - \tau^2 p} \leq 2\pi \eps/90 + \eps/20$, where the first step is by the triangle inequality and the second step is by Theorem~\ref{Thm:unbiasPhase} and Equation~(\ref{Eq:linamp}). The variance is at most $\var{\ha{p}\,'} \leq \tau^2/(7 t)^2 + \eps/90$ by Theorem~\ref{Thm:unbiasPhase}. Since the final estimate is the scalar multiple $\ha{p} = 2\pi \ha{p}\,' / \tau^2$ of $\ha{p}\,'$, its expectation and variance are $\abs*{\ex*{\ha{p}} - p} \leq 2\pi(2\pi \eps/90 + \eps/20)/ \tau^2 \leq \eps/4$ and $\var{\ha{p}} \leq 4\pi^2/(7 t\tau)^2 + 4\pi^2\eps/(90\tau^4) \leq 91p/t^2 + \eps/9$.

  Consider now the case where $U$ is replaced with the actual unitary $\pora_{\eps}$. Let $T$ denote the number of times the operator $\pora_{\eps}$ is used by the unbiased phase estimation procedure at step~8. By Theorem~\ref{Thm:unbiasPhase}, we have $T = \bo{(t/\tau)\log(t/(\eps \tau))}$. By standard approximation arguments (see~\cite[p.194]{NC11b} for instance), the distribution of the actual estimate~$\td{p}$ is at distance at most $2 T \norm{\pora - \pora_{\eps'}}$ (in $\ell_{\infty}$-norm) from that of $\ha{p}$. Moreover, by a simple analysis of Algorithm~\ref{Alg:UPE}, both estimates are distributed in $[-1,1]$ and can take at most $36\pi t/\tau$ different values. We conclude that $\abs*{\ex*{\td{p}} - p} \leq \abs*{\ex*{\ha{p}} - p} + 2 T \norm{\pora - \pora_{\eps'}} \cdot 36\pi t/\tau \leq \eps/4 + 2T\eps' \cdot 36\pi t/\tau \leq \eps/2$ and $\var{\td{p}} \leq \var{\ha{p}} + 8 T \norm{\pora - \pora_{\eps'}} \cdot 36\pi t/\tau \leq 91 p/t^2 + \eps/9 + 8 T \eps' \cdot 36\pi t/\tau \leq 91 p/t^2 + \eps/8$ by choosing a value $\eps' = \ta[\big]{\frac{\tau^2}{t^2\log(t/(\eps\tau))}}$.
\end{proof}

\section{Mean estimation}
\label{Sec:mean}
We construct first a quantum estimator that decreases the variance quadratically faster than classically while keeping the estimate (nearly) unbiased. Moreover, it requires a single qsample, which is not destroyed with high probability. The main ingredient is the nondestructive amplitude estimation algorithm constructed in the previous section. The estimator is based on expressing the mean as a linear combination of carefully chosen Bernoulli random variables (with parameters $\mut^+_j$ and $\mut^-_j$ defined at steps 4 and 7 of Algorithm~\ref{Alg:unEstim}) which are related to different truncations of the input random variable. We use a more sophisticated truncation pattern than in previous work~\cite{Hei02j,Mon15j,Ham21c,CHJ22c} to ensure that the estimate is nearly unbiased. Our estimator requires knowing a rough estimate $\medt$ of the mean and an upper-bound $\vart$ on the variance of the random variable. We explain how to find $\medt$ nondestructively in Proposition~\ref{Prop:medi}, whereas $\vart$ will be provided directly by our applications.

\begin{algorithm}[H]
  \caption{Unbiased mean estimator, $\qestim(X,t,\medt,\vart,\eps)$.}
  \label{Alg:unEstim}
  \begin{algorithmic}[1]
    \State Define the shifted random variable $\bar{X} = X - \medt$.
    \State Set $a_{-1} = 0$ and $a_j = 2^j \vart$ for $j \in \itv{0}{k}$, where $k = \ceil{\log(580/\eps)}$.
    \For{$j = 0$ to $k$}
        \State\StateInd{Use a controlled rotation to transform $\ket{\pi_X}$ into the state
          \[\ket{\pi_j^+} = \sum_{x \notin (a_{j-1}, a_j]} \sqrt{\pi_{\bar{X}}(x)} \ket{x} \ket{0} + \sum_{x \in (a_{j-1}, a_j]} \sqrt{\pi_{\bar{X}}(x)} \ket{x}\pt*{\sqrt{1-\frac{x}{a_j}}\ket{0} + \sqrt{\frac{x}{a_j}} \ket{1}}\]
        and denote $\mu^+_j = \ex[\Big]{\frac{\bar{X}}{a_j} \ind{\bar{X} \in (a_{j-1}, a_j]}} = \norm{(\id \otimes \proj{1}) \ket{\pi_j^+}}^2$.}
        \State\StateInd{Compute an estimate $\mut^+_j$ of $\mu^+_j$ by using the \hyperref[Thm:nduAE]{nondestructive unbiased amplitude estimation} algorithm $\nduae(\ket{\pi_j^+},\id \otimes \proj{1},300t,\eps')$ where $\eps' = \pt[\big]{\frac{\eps}{2320t}}^2$.}
        \State\StateInd{Restore $\ket{\pi_j^+}$ to $\ket{\pi_X}$ by undoing the controlled rotation.}
        \State\StateInd{Repeat steps 4--6 with the state
          \[\ket{\pi_j^-} = \sum_{- x \notin (a_{j-1}, a_j]} \sqrt{\pi_{\bar{X}}(x)} \ket{x} \ket{0} + \sum_{-x \in (a_{j-1}, a_j]} \sqrt{\pi_{\bar{X}}(x)} \ket{x}\pt*{\sqrt{1-\frac{\abs{x}}{a_j}}\ket{0} + \sqrt{\frac{\abs{x}}{a_j}} \ket{1}}\]
        to obtain an estimate $\mut^-_j$ of $\mu^-_j = \ex[\Big]{\frac{\abs{\bar{X}}}{a_j} \ind{\bar{X} \in [-a_j, -a_{j-1})}}$.}
     \EndFor
    \State Output $\mut = \medt + \sum_{j=0}^k a_j(\mut^+_j - \mut^-_j)$.
  \end{algorithmic}
\end{algorithm}

\begin{proposition}[\sc Unbiased mean estimator]
  \label{Prop:unEstim}
  Let $X$ be a finite random variable with mean~$\mu$ and variance~$\sigma^2$ encoded as a qsample $\ket{\pi_X}$. Given two reals $\eps \in (0,1)$ and $t \geq 1$ and two estimates $\medt, \vart$ such that $\abs{\mu - \medt} \leq 17 \sigma$ and $\vart \geq \sigma$, the \emph{unbiased mean estimator} $\qestim\pt{X,t,\medt,\vart,\eps}$ (Algorithm~\ref{Alg:unEstim}) outputs a mean estimate~$\mut$ such that
    \[\abs*{\ex*{\mut} - \mu} \leq \eps \vart \quad \text{and} \quad \var{\mut} \leq \pt*{\frac{\vart}{t}}^2.\]
  The algorithm needs one copy of $\ket{\pi_X}$, which is restored at the end of the computation with probability at least $1-\eps$, and $\wbo{t\log(1/\eps)^2}$ applications of the reflection $\id - 2\proj{\pi_X}$ in expectation.
\end{proposition}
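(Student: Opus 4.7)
My plan is to view $\mut - \medt$ as a plug-in estimator for the identity
\[
\ex{\bar X} \;=\; \sum_{j=0}^{\infty} a_j\bigl(\mu_j^+ - \mu_j^-\bigr), \qquad \bar X := X - \medt,\; a_j := 2^j \vart,
\]
which holds almost surely since $\bar X = \sum_{j\geq 0}\bar X\cdot\ind{|\bar X|\in(a_{j-1},a_j]}$ (with $a_{-1}=0$) decomposes $\bar X$ into positive and negative dyadic shells. Algorithm~\ref{Alg:unEstim} approximates this by truncating the sum at $j = k = \lceil \log(580/\eps)\rceil$ and replacing each $\mu_j^\pm$ by the output of the nondestructive unbiased amplitude estimator $\nduae$ (Theorem~\ref{Thm:nduAE}) applied to the rotated qsamples $\ket{\pi_j^\pm}$. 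Since each $\nduae$ call restores $\ket{\pi_X}$ with high probability, the estimates $\mut_j^\pm$ are mutually independent conditioned on restoration.

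\textbf{Bias.} I would decompose the total bias as truncation bias plus amplitude-estimation bias. For the truncation, $\sigma\leq\vart$ and $|\mu-\medt|\leq 17\sigma$ give $\ex{\bar X^2} = \sigma^2 + (\mu-\medt)^2 \leq 290\,\vart^2$, and Markov yields $\ex{|\bar X|\ind{|\bar X|>a_k}} \leq \ex{\bar X^2}/a_k \leq \eps\vart/2$ by the choice of $k$. For the amplitude-estimation bias, Theorem~\ref{Thm:nduAE} gives $\abs{\ex{\mut_j^\pm} - \mu_j^\pm} \leq \eps'$ with $\eps'=(\eps/(2320t))^2$, and summing the $a_j$-weighted contributions produces at most $2\eps'\sum_{j=0}^k a_j = O(\eps'\vart\cdot 2^k) \leq \eps\vart/2$. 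Combining, $\abs{\ex{\mut}-\mu}\leq \eps\vart$.

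\textbf{Variance (main obstacle).} By conditional independence,
\[
\var{\mut} \;\leq\; \sum_{j=0}^k a_j^2\bigl(\var{\mut_j^+} + \var{\mut_j^-}\bigr) \;\leq\; \frac{91}{(300t)^2}\sum_{j=0}^k a_j^2\bigl(\mu_j^+ + \mu_j^-\bigr) \;+\; 2\eps'\sum_{j=0}^k a_j^2.
\]
The main technical step, and the hardest part of the proof, is bounding $\sum_j a_j^2(\mu_j^++\mu_j^-) = O(\vart^2)$; only then is the $1/(300t)^2$ prefactor enough to bring the first sum inside the target $(\vart/t)^2$. The plan is a dyadic telescoping: on the shell $\{|\bar X|\in(a_{j-1},a_j]\}$ for $j\geq 1$ one has $a_j\leq 2|\bar X|$, so
\[
a_j^2\mu_j^\pm \;=\; a_j\cdot\ex{|\bar X|\,\ind{\pm\bar X\in(a_{j-1},a_j]}} \;\leq\; 2\,\ex{\bar X^2\,\ind{\pm\bar X\in(a_{j-1},a_j]}},
\]
and summing over $j\geq 1$ and both signs telescopes to $\leq 2\ex{\bar X^2}\leq 580\,\vart^2$; the $j=0$ term is controlled separately by $a_0^2(\mu_0^++\mu_0^-)\leq 2\vart^2$ using $\mu_0^\pm\leq 1$. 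The residual $\eps'$-contribution is absorbed by the tuning $\sum_j a_j^2 = O(4^k\vart^2) = O(\vart^2/\eps^2)$, which multiplied by $\eps' = O(\eps^2/t^2)$ gives $O(\vart^2/t^2)$.

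\textbf{Cost and nondestructiveness.} The algorithm performs $2(k+1) = O(\log(1/\eps))$ calls to $\nduae$, each on a qsample obtainable from $\ket{\pi_X}$ by one controlled rotation and its inverse. By Theorem~\ref{Thm:nduAE}, each call uses $\wbo{t\log(1/\eps')} = \wbo{t\log(1/\eps)}$ expected reflections $\id-2\proj{\pi_X}$ and fails to restore $\ket{\pi_X}$ with probability at most $\eps'$; a union bound gives total restoration failure probability at most $2(k+1)\eps' \leq \eps$ and total expected cost $\wbo{t\log(1/\eps)^2}$, matching the claim.
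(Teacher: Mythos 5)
Your proposal is correct and follows essentially the same route as the paper's own proof: the same dyadic-shell decomposition of $\bar{X}$, the same second-moment/telescoping bound $\sum_j a_j^2(\mu_j^+ + \mu_j^-) \leq 2\ex{\bar{X}^2} + 2\vart^2 = O(\vart^2)$ for the variance, the same Markov-type tail bound for the truncation bias, and the same accounting of the $\eps'$-terms, restoration failures, and reflection count via Theorem~\ref{Thm:nduAE}. The only difference is cosmetic: you leave the final variance bound as $O(\vart^2/t^2)$ where the statement (and the paper, via its tuned constants $580$, $2320$, $300t$) asserts the explicit constant $(\vart/t)^2$, so a final numerical check of those constants would be needed to match the claim verbatim.
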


\begin{proof}
  We start by giving several properties of the shifted random variable $\bar{X} = X-\medt$. First, its second moment is at most
    \begin{equation}
      \label{Eq:median}
      \ex{\bar{X}^2} = \ex{(X - \mu)^2} + (\mu-\medt)^2 \leq 290 \sigma^2
    \end{equation}
 since, by assumption, the input estimate $\medt$ satisfies $\abs{\mu - \medt} \leq 17 \sigma$. Next, the expectation of $\abs{\bar{X}}$ over any interval $(a_{j-1}, a_j]$ defined in Algorithm~\ref{Alg:unEstim} when $j > 0$ is at most $\ex[\big]{\abs{\bar{X}} \ind{\abs{\bar{X}} \in (a_{j-1}, a_j]}} \leq \ex[\big]{\bar{X}^2  \ind{\abs{\bar{X}} \in (a_{j-1}, a_j]}}/a_{j-1}$ because $a_{j-1} > 0$. Hence, the normalized means $\mu^+_j$ and~$\mu^-_j$ satisfy
    \begin{equation}
      \label{Eq:mean2}
      \mu^+_j + \mu^-_j \leq \frac{2}{a_j^2} \ex[\big]{\bar{X}^2  \ind{\abs{\bar{X}} \in (a_{j-1}, a_j]}} \qquad \text{when $j > 0$.}
    \end{equation}
  Lastly, the expectation of $\abs{\bar{X}}$ over the interval $[a_k,+\infty)$ is at most
    \begin{equation}
      \label{Eq:mean3}
      \ex{\abs{\bar{X}} \ind{\abs{\bar{X}}} \geq a_k} \leq \frac{\ex{\bar{X}^2}}{a_k} \leq \frac{290\sigma}{2^k}
    \end{equation}
  where we used Equation~(\ref{Eq:median}) and the assumption that $\vart \geq \sigma$.

  We now analyze the expectation and variance of the final estimate $\mut$. Each of the $\mut^+_j$ and~$\mut^-_j$ estimates introduces a bias of at most $\eps'$ by Theorem~\ref{Thm:nduAE} and there is another bias due to the truncated part $\ex{\bar{X} \ind{\abs{\bar{X}}} \geq a_k}$ analyzed in Equation~(\ref{Eq:mean3}). Overall, we have $\abs*{\ex*{\mut} - \mu} \leq \sum_{j=0}^k a_j \eps' + \frac{290\sigma}{2^k} \leq \pt{2^{k+1} \eps' + \frac{290}{2^k}} \vart \leq \eps \vart$. Since the estimates $\mut_j$ are independent random variables, the variance is $\var{\mut} = \sum_{j=0}^{k} a_j^2 \var{\mut_j}$. Thus, by Theorem~\ref{Thm:nduAE} and Equation~(\ref{Eq:mean2}), $\var{\mut} \leq \sum_{j=0}^{k} a_j^2 \pt*{\frac{\mu^+_j + \mu^-_j}{(30t)^2} + \eps'} \leq a_0^2 \pt*{\frac{2}{(30t)^2} + \eps'} + \sum_{j=1}^{k} \pt[\Big]{\frac{2 \ex[\big]{\bar{X}^2 \ind{\abs{\bar{X}} \in (a_{j-1}, a_j]}}}{(30t)^2} + a_j^2 \eps'} \leq \frac{2\vart^2 + 2 \ex*{\bar{X}^2}}{(30t)^2} + 2^{2k+1} \vart^2 \eps' \leq \pt*{\frac{582}{(30t)^2} + 2^{2k+1} \eps'} \vart^2 \leq \pt[\big]{\frac{\vart}{t}}^2$.

  By Theorem~\ref{Thm:nduAE}, the algorithm uses $\wbo[\big]{k T \log(1/\eps')} = \wbo{t \log(1/\eps)^2}$ reflections through~$\ket{\pi_X}$ and it preserves the copy of $\ket{\pi_X}$ with probability at least $1-2(k+1)\eps' \geq 1-\eps$.
\end{proof}

We now explain how to find an estimate $\medt$ that satisfies the first requirement $\abs{\mu - \medt} \leq 17 \sigma$ needed to apply the above estimator. If we had access to classical samples then it would suffice to compute the average of $\bo{\log 1/\eta}$ samples to get the above inequality with probability $1-\eta$. Nevertheless, we do not know how to extract a classical sample from a qsample $\ket{\pi_X}$ without destroying it. Since the mean is always within one standard deviation of the median, we can instead attempt to estimate the median. There exist quantum algorithms~\cite{NW99c,Nay99d,Ham21c} for computing the quantiles of a distribution, however they also require measuring qsamples. We follow a new approach that incurs a logarithmic overhead in the size of the support but is nondestructive. Our algorithm consists of performing a binary search over the support of the random variable, where we test if a value~$x$ is close to the median by estimating the tail probability $\pr{X \geq x}$ with the nondestructive amplitude estimation algorithm.

\begin{algorithm}[H]
  \caption{Nondestructive median estimator, $\medi(X,\eta)$.}
  \label{Alg:medi}
  \begin{algorithmic}[1]
    \State Let $x_1 < \dots < x_n$ denote the values in the support of $X$.
    \State Set $a = 1$ and $b = n+1$. Run the following binary search until $a = b$:
      \Indent
        \State\StateInd{Set $k = \floor{\frac{a+b}{2}}$.}
        \State\StateInd{Compute an estimate $\td{p}_k$ of $p_k = \pr{X \geq x_k}$ by running the \hyperref[Prop:ndAE]{nondestructive amplitude estimation} algorithm $\ndae(\ket{\pi_k},\id \otimes \proj{1},3\sqrt{2},\eta/\log(m))$ where
        \[\ket{\pi_k} = \pt[\Bigg]{\sum_{x < x_k} \sqrt{\pi_X(x)} \ket{x}} \ket{0} + \pt[\Bigg]{\sum_{x \geq x_k} \sqrt{\pi_X(x)} \ket{x}}\ket{1}.\]}
        \State\StateInd{If $\td{p}_k \leq 1/6$ then set $b = k$ else set $a = k+1$.}
      \EndIndent
    \State Output $\medt = x_{a-1}$.
  \end{algorithmic}
\end{algorithm}

\begin{proposition}[\sc Nondestructive median estimator]
  \label{Prop:medi}
  Let $X$ be a finite random variable with support size $n$, mean $\mu$ and variance $\sigma^2$. Given a real $\eta \in (0,1)$, the \emph{nondestructive median estimator} $\medi(X,\eta)$ (Algorithm~\ref{Alg:medi}) outputs an approximate median~$\medt$ such that
    \[\abs{\mu - \medt} \leq 17\sigma\]
  with probability at least $1-\eta$.
  The algorithm needs one copy of $\ket{\pi_X}$, which is restored at the end of the computation with probability at least $1-\eta$, and $\bo{\log(n)\log(\log(n)/\eta)}$ applications of the reflection $\id - 2\proj{\pi_X}$.
\end{proposition}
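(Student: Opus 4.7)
The plan is to combine Chebyshev's inequality with a standard binary-search invariant, where each comparison is realized by a call to $\ndae$. Chebyshev gives $\pr{\abs{X-\mu} \geq 17\sigma} \leq 1/289$, which yields two implications I use throughout: writing $p_k = \pr{X \geq x_k}$ as in the algorithm, if $p_k > 1/289$ then $x_k \leq \mu + 17\sigma$, and if $\pr{X \leq x_k} > 1/289$ then $x_k \geq \mu - 17\sigma$. The algorithm's threshold $1/6$ sits comfortably inside the \emph{safe band} $(1/289,\,288/289)$, which leaves room for a constant-precision amplitude estimate $\td{p}_k$ to occasionally misestimate $p_k$ without the threshold comparison producing a misleading verdict.

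First I would verify the per-call guarantee of $\ndae$ with parameters $(3\sqrt{2},\,\eta/\log n)$: by Proposition~\ref{Prop:ndAE}, with probability at least $1-\eta/\log n$ the estimate $\td{p}_k$ is accurate enough that $\td{p}_k > 1/6$ implies $p_k > 1/289$ and $\td{p}_k \leq 1/6$ implies $p_k < 288/289$, and moreover $\ket{\pi_X}$ is restored. A union bound over the $O(\log n)$ iterations of the binary search then gives a good event of probability at least $1-\eta$ on which every comparison is faithful and every qsample is preserved; all subsequent claims are argued conditional on this event.

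Next I would analyze the binary search. The tail probability $p_k$ is nonincreasing in $k$, with $p_1 = 1$ and $p_{n+1} = 0$ holding exactly because the corresponding $\ket{\pi_k}$ are product states. Hence the algorithm terminates with some $a \in \{2,\ldots,n+1\}$, so $\medt = x_{a-1}$ is well defined. A short case analysis of the last iteration shows that the algorithm ends either by the update $a \leftarrow k+1$ at $k = a-1$ (directly certifying $\td{p}_{a-1} > 1/6$) or by $b \leftarrow k$ at $k = a$ (directly certifying $\td{p}_a \leq 1/6$); the other side of the comparison is certified by some earlier binary-search step when $a \notin \{2,n+1\}$, or trivially by the exact values $p_1 = 1$ or $p_{n+1} = 0$ in the two boundary regimes. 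Converting these certifications through the per-call guarantee yields $p_{a-1} > 1/289$ and $p_a < 288/289$. The first implication of the previous paragraph then gives $x_{a-1} \leq \mu + 17\sigma$, and $\pr{X \leq x_{a-1}} = 1 - p_a > 1/289$ together with the second implication gives $x_{a-1} \geq \mu - 17\sigma$; hence $\abs{\medt - \mu} \leq 17\sigma$.

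The complexity and nondestructiveness follow from the same union bound and from the cost of a single $\ndae$ call: $O(\log n)$ iterations at constant $t$ and precision $\eta/\log n$ contribute $\bo{\log(n)\log(\log(n)/\eta)}$ reflections through $\ket{\pi_X}$ in total. The main technical obstacle I anticipate is the binary-search bookkeeping just described, in particular the boundary regime $a \in \{2,n+1\}$ where one of the two certifying comparisons is never actually executed by $\ndae$ and must be supplied by the exact boundary values $p_1 = 1$ or $p_{n+1} = 0$; the remaining steps are a routine translation between Chebyshev's inequality and the chosen threshold $1/6$.
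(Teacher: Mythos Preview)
Your proof is correct and follows the paper's approach: a union bound over the $O(\log n)$ calls to $\ndae$, a binary-search invariant pinning down $p_{a-1}$ and $p_a$, and a tail bound converting this into $\abs{\mu - \medt} \leq 17\sigma$. The only cosmetic difference is that the paper invokes a quantile inequality from~\cite{BB04j} (essentially one-sided Chebyshev) and obtains the sharper constant $\sqrt{17}$, whereas you use two-sided Chebyshev directly to land exactly on~$17$; both suffice for the proposition as stated.
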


\begin{proof}
  By Theorem~\ref{Prop:ndAE} and a union-bound argument, all the estimates $\td{p}_k$ computed by Algorithm~\ref{Alg:medi} satisfy $\abs{\td{p}_k - p_k} \leq \sqrt{p_k}/(3\sqrt{2}) + 1/18$ with probability at least $1-\eta$. In particular, if $p_k \geq 1/2$ then $\td{p}_k \geq p_k - \sqrt{p_k}/(3\sqrt{2}) - 1/18 \geq 5/18$ and if $p_k \leq 1/18$ then $\td{p}_k \leq p_k + \sqrt{p_k}/(3\sqrt{2}) + 1/18 \leq 1/6$. Thus, the binary search stops with $a = b$ satisfying $p_a < 1/2 \leq 9p_{a-1}$ with probability at least $1-\eta$ (note that $a \geq 2$ since $p_1 = 1$). Consequently, if we define the quantile function $Q_X(p) = \sup\set{x \in \R : \pr{X \geq x} \geq p}$ for all $p \in [0,1]$, we have $Q_X(1/2) \leq \medt \leq Q_X(1/18)$. Finally, the result $\abs{\mu - \medt} \leq 17\sigma$ follows from the quantile inequality $\mu - \sigma \sqrt{\frac{p}{1-p}} \leq Q_X(p) \leq \mu + \sigma \sqrt{\frac{1-p}{p}}$ proved in~\cite[Section 3.1]{BB04j}.
\end{proof}

We describe our final estimator for estimating the mean of a product of $\ell$ independent random variables $X_1,\dots,X_{\ell}$ with relative error $\eps$. We make the standard assumptions that the relative second moment $\frac{\ex{X_i^2}}{\ex{X_i}^2}$ and the inverse fidelity $1/\abs{\ip{\pi_{X_i}}{\pi_{X_{i+1}}}}^2$ are upper-bounded by some known value~$B$, which is a constant in our applications (Section~\ref{Sec:partition}). Our estimator uses a similar approach to that of Dyer and Frieze~\cite{DF91c} by taking the product of $\ell$ (nearly) unbiased estimates $\mut_1, \dots, \mut_{\ell}$ of $\ex{X_1},\dots,\ex{X_{\ell}}$ respectively, each with variance $\bo{\eps^2/\ell}$. We obtain a quantum speed-up by using the quantum unbiased estimator of Proposition~\ref{Prop:unEstim} to reduce the variances quadratically faster than classically.

\begin{algorithm}[H]
  \caption{Product estimator, $\qprod(X_1,\dots,X_{\ell},B,\eps)$.}
  \label{Alg:prodEst}
  \begin{algorithmic}[1]
    \For{$i = 1$ to $\ell$}
      \State Define $\ha{X}_i$ to be the average of $K = \max\set{1,1156(B-1)}$ independent samples from $X_i$.
      \State\StateInd{Compute an estimate $\medt_i$ of the median of $\ha{X}_i$ by using the \hyperref[Prop:medi]{nondestructive median estimator} $\medi(\ha{X}_i,1/(11\ell))$. Set $\vart_i = \medt_i B$.}
      \State\StateInd{Compute an estimate $\mut_i$ of the expectation $\ex{\ha{X}_i}$ by using the \hyperref[Prop:unEstim]{unbiased mean estimator} $\qestim(\ha{X}_i,\frac{96B\sqrt{\ell}}{\eps},\medt_i,\vart_i,\frac{\eps}{6\ell B})$.}
      \State Anneal the qsample $\ket{\pi_{\ha{X}_i}}$ to $\ket{\pi_{\ha{X}_{i+1}}}$ with \hyperref[Lem:ndAnneal]{probabilistic annealing}.
    \EndFor
    \State Output $\mut = \mut_1 \cdots \mut_{\ell}$.
  \end{algorithmic}
\end{algorithm}

\begin{theorem}[\sc Product estimator]
  \label{Thm:prodEst}
  Let $B > 1$ and $\eps \in (0,1)$. Consider a sequence $X_1,\dots,X_{\ell}$ of $\ell$ independent random variables with support size $n$, bounded relative second moment $\frac{\ex{X_i^2}}{\ex{X_i}^2} \leq B$ and bounded fidelity $\abs{\ip{\pi_{X_i}}{\pi_{X_{i+1}}}}^2 \geq 1/B$ for all $i$. Denote their product as $X = X_1 \cdots X_{\ell}$. Then, the \emph{product estimator} $\qprod(X_1,\dots,X_{\ell},B,\eps)$ (Algorithm~\ref{Alg:prodEst}) outputs a multiplicative-error estimate~$\mut$ such that
    \[\abs[\Big]{\mut - \ex[\Big]{\prod_{i=1}^{\ell} X_i}} \leq \eps \ex[\Big]{\prod_{i=1}^{\ell} X_i}\]
  with probability at least~$2/3$. It uses $\bo{B}$ copies of $\ket{\pi_{X_1}}$ and $\wbo[\big]{B^2\ell^{3/2}/\eps + B \ell \log(n)}$ reflections through the states $\ket{\pi_{X_1}},\dots,\ket{\pi_{X_{\ell}}}$ in expectation.
\end{theorem}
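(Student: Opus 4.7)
The plan is to follow the Dyer--Frieze product estimator strategy, with the quantum unbiased mean estimator of Proposition~\ref{Prop:unEstim} replacing the classical sample averages as the inner estimates. The idea is that each $\mut_i$ is nearly unbiased with small relative variance, so the product $\mut = \prod_i \mut_i$ inherits a relative variance that is essentially the sum of the per-coordinate relative variances and a relative bias bounded by the sum of the per-coordinate relative biases. Tuning the parameters so that the former is $O(\eps^2)$ and the latter is $O(\eps)$ will yield the target $\eps$-relative error after one application of Chebyshev's inequality.

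First I would verify the preconditions of Proposition~\ref{Prop:unEstim} for each~$i$. The hypothesis $\ex{X_i^2} \le B \ex{X_i}^2$ gives $\var{\ha{X}_i} = \var{X_i}/K \le (B-1)\ex{X_i}^2/K$, so the choice $K = 1156(B-1)$ in step~2 makes $\sigma_{\ha{X}_i}$ a small constant fraction of $\ex{X_i}$. Proposition~\ref{Prop:medi} then returns $\medt_i$ within $17 \sigma_{\ha{X}_i}$ of $\ex{\ha{X}_i} = \ex{X_i}$ with high probability, hence $\medt_i$ is within a constant factor of $\ex{X_i}$, and consequently $\vart_i = \medt_i B \ge \sigma_{\ha{X}_i}$, as required. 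A union bound over $i$ keeps the aggregate failure probability of the median and mean subroutines under control throughout the loop.

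Proposition~\ref{Prop:unEstim} applied with $t = 96 B\sqrt{\ell}/\eps$ and accuracy $\eps/(6\ell B)$ then gives $\abs{\ex{\mut_i} - \ex{X_i}} \le \frac{\eps}{6\ell B}\vart_i = O(\eps/\ell)\cdot\ex{X_i}$ and $\var{\mut_i} \le (\vart_i/t)^2 = O(\eps^2/\ell)\cdot\ex{X_i}^2$. Writing $\eta_i = (\ex{\mut_i} - \ex{X_i})/\ex{X_i}$ and $r_i = \var{\mut_i}/\ex{\mut_i}^2$, independence of the $\mut_i$ yields
\[\frac{\var{\mut}}{\ex{\mut}^2} = \prod_{i=1}^\ell (1+r_i) - 1 \le e^{\sum_i r_i} - 1 = O(\eps^2),\]
and a standard Bernoulli-type bound gives $\abs{\ex{\mut} - \ex{X}}/\abs{\ex{X}} = \abs[\big]{\prod_i (1+\eta_i) - 1} \le e^{\sum_i \abs{\eta_i}} - 1 = O(\eps)$. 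Chebyshev's inequality on $\mut$ combined with this bias bound via the triangle inequality then delivers the target $\eps$-relative error with probability at least $2/3$, after choosing constants so that the various subroutine failure modes absorb into that budget.

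The main technical obstacle is the complexity bookkeeping. Each call to unEstim on $\ha{X}_i$ costs $\wbo{t} = \wbo{B\sqrt{\ell}/\eps}$ reflections through $\ket{\pi_{\ha{X}_i}}$, and since $\ha{X}_i$ is an average of $K = \bo{B}$ independent copies of $X_i$, each such reflection decomposes into $\bo{B}$ reflections through $\ket{\pi_{X_i}}$, producing a total of $\wbo{B^2\ell^{3/2}/\eps}$; the median subroutine contributes the additive $\wbo{B\ell\log n}$ term analogously. The $\bo{B}$ copies of $\ket{\pi_{X_1}}$ in the statement pay for constructing $\ket{\pi_{\ha{X}_1}}$ out of $K = \bo{B}$ independent copies of $\ket{\pi_{X_1}}$; the subsequent qsamples $\ket{\pi_{\ha{X}_{i+1}}}$ are produced from $\ket{\pi_{\ha{X}_i}}$ at step~5, which uses $\bo{1}$ reflections in expectation thanks to the fidelity hypothesis $\abs{\ip{\pi_{X_i}}{\pi_{X_{i+1}}}}^2 \ge 1/B$. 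Propagating the ``high-probability preservation'' guarantees of each subroutine through $\ell$ stages without blowing up the total failure probability is the main delicacy; it is handled by choosing each internal $\eps$-parameter as $1/\poly(\ell)$ and applying a union bound.
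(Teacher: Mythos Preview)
Your proposal is correct and follows essentially the same argument as the paper: verify the preconditions of Proposition~\ref{Prop:unEstim} via $K$-fold averaging and the median estimator, combine the per-coordinate bias and variance bounds multiplicatively, and finish with Chebyshev plus the triangle inequality. One minor slip in the bookkeeping: the annealing at step~5 is performed copy-by-copy and costs $O(B)$ reflections per copy by Lemma~\ref{Lem:ndAnneal} (the fidelity bound is only $\ge 1/B$), hence $O(KB)=O(B^2)$ per stage rather than $O(1)$; this is still a lower-order term and leaves the stated complexity unchanged.
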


\begin{proof}
  By a union-bound argument, with probability at least $10/11$, all the median estimates $\medt_i$ computed at step 3 are within distance $17\sqrt{\var{\ha{X}_i}}$ of the means $\ex{\ha{X}_i}$. We condition on this event happening for the rest of the proof.

  We start by analysing steps 2--5 for a given random variable $X_i$. First, notice that we can construct a qsample for $\ha{X}_i$ by taking $\ket{\pi_{\ha{X}_i}} = U_{\mathrm{average}}\pt{\ket{\pi_{X_i}}^{\otimes K} \otimes \ket{0}}$ where $U_{\mathrm{average}}$ is a unitary computing the average of the first $K$ registers in the last register. The expectation is unchanged $\ex{X_i} = \ex{\ha{X}_i}$, whereas the relative variance is at most
    \begin{equation}
      \label{Eq:qsampleAve}
      \frac{\var{\ha{X}_i}}{\ex{X_i}^2} = \frac{\var{X_i}}{K\ex{X_i}^2} \leq \frac{B-1}{K}.
    \end{equation}
  The next equation shows that the distance between the rough estimate $\medt_i$ computed at step~3 and the expectation is bounded by a multiple of the standard deviation (by Proposition~\ref{Prop:medi}), which in turn is a small multiple of the expectation (since the relative variance is small by Equation~(\ref{Eq:qsampleAve})).
    \begin{equation}
      \label{Eq:mean4}
      \abs{\medt_i - \ex{X_i}} \leq 17 \sqrt{\var{\ha{X}_i}} \leq 17\sqrt{\frac{B-1}{K}} \ex{X_i} = \frac{\ex{X_i}}{2}
    \end{equation}
  Finally, the expectation of each estimate $\mut_i$ computed at step 4 is bounded as
    \begin{equation}
      \label{Eq:mean5}
      \abs{\ex{\mut_i} - \ex{X_i}} \leq \frac{\eps}{6\ell B} \vart_i \leq \frac{\eps}{4\ell}\ex{X_i}
    \end{equation}
  where the first inequality is by Theorem~\ref{Thm:nduAE} and the second is by $\vart_i = B \medt_i$ and Equation~(\ref{Eq:mean4}).

  Consider now the product random variable $X = X_1 \cdots X_{\ell}$. The expectation of the final estimate~$\mut$ is at most $\ex{\mut} \leq (1+ \frac{\eps}{4\ell})^{\ell} \prod_{i = 1}^{\ell} \ex{\mu_i} \leq (\eps/2) \ex{X}$ and at least  $\ex{\mut} \geq (1- \frac{\eps}{4\ell})^{\ell} \prod_{i = 1}^{\ell} \ex{\mu_i} \geq (\eps/4) \ex{X}$ by Equation~(\ref{Eq:mean5}). Thus, $\abs{\ex{\mut} - \ex{X}} \leq (\eps/2) \ex{X}$.  The relative variance of $\mut$ is at most $\frac{\var{\mut}}{\ex{\mut}^2} = \prod_{i = 1}^{\ell} \pt*{1 + \frac{\var{\mut_i}}{\ex{\mut_i}^2}} - 1 \leq \prod_{i = 1}^{\ell} \pt*{1 + \frac{\vart_i^2}{(t\ex{\mut_i})^2}} - 1 = \prod_{i = 1}^{\ell} \pt[\Big]{1 + \pt*{\frac{\eps B \medt_i}{96\sqrt{\ell} B \ex{\mut_i}}}^2} - 1 \leq \pt*{1 + \frac{\eps^2}{32\ell}}^{\ell} - 1 \leq \eps^2/16$ where the second step is by Proposition~\ref{Prop:unEstim} and the fourth step uses that $\medt_i/\ex{\mut_i} \leq 3$ by Equations~(\ref{Eq:mean4}) and~(\ref{Eq:mean5}). We conclude by using the triangle and Chebyshev inequalities that $\abs{\mut - \ex{X}} \leq \abs{\mut - \ex{\mut}} + \abs{\ex{\mut} - \ex{X}} \leq (\eps/2 + \eps/2) \ex{X}$ with probability at least $1 - 4\frac{\var{\mut}}{(\eps\ex{\mut})^2} \geq 3/4$.

  The algorithm needs $K$ copies of the initial qsample $\ket{\pi_{X_1}}$ to anneal successively into~$K$ copies of~$\ket{\pi_{X_i}}$ at each stage of steps~2--5. The transition from $\ket{\pi_{X_i}}^{\otimes K}$ to $\ket{\pi_{X_{i+1}}}^{\otimes K}$ requires $\bo{K B}$ reflections through $\ket{\pi_{X_i}}$ and $\ket{\pi_{X_{i+1}}}$, by using Lemma~\ref{Lem:ndAnneal} on each of the~$K$ subsystems. Furthermore, at each stage, step~3 uses $\wbo{\log(n)\log(\ell)}$ reflections through~$\ket{\pi_{\ha{X}_i}}$ by Proposition~\ref{Prop:medi} and step~4 uses $\wbo{B\sqrt{\ell}/\eps}$ reflections through $\ket{\pi_{\ha{X}_i}}$ by Proposition~\ref{Prop:unEstim}. Overall, the number of reflections through the original states $\ket{\pi_{X_1}},\dots,\ket{\pi_{X_{\ell}}}$ is thus $\wbo[\big]{\ell K \pt[\big]{B + \log(n) + B\sqrt{\ell}/\eps}}$.
\end{proof}

\section{Partition function estimation}
\label{Sec:partition}
  In this section, we showcase an application of our newly-constructed quantum mean estimation algorithm. Specifically, we show how it can be used to speed up existing quantum algorithms for estimating partition functions. In Section~\ref{Sec:annealing}, we elaborate on the generic algorithm for partition function estimation and how our results provide a speed up and, in Section~\ref{Sec:applications}, we discuss how this gives rise to more efficient quantum algorithms for several applications.

  \subsection{General algorithm for partition function estimation}
  \label{Sec:annealing}
  In this subsection, we first describe our partition function estimation algorithm in high level, and define the required notation along the way. We follow the exposition in \cite{HW20c}, but use slightly different notations to match the rest of this document more closely.

Let $\Omega$ be a state space, and let $H : \Omega \to \itv{0}{n}$ be a Hamiltonian, i.e., a function associating an energy to each of the states. Our goal will be to compute the number of states whose energy is $0$, i.e., $|H^{-1}(0)|$. To that end, we define the partition function $Z : [0,\infty] \to \mathbb{R}$ as
\[Z(\beta) = \sum_{x \in \Omega} e^{-\beta H(x)},\]
and we observe that $Z(0) = |\Omega|$, and $Z(\infty) = |H^{(-1)}(0)|$. The partition function arises frequently in statistical physics, where $\beta$ is referred to as the inverse temperature.

Next, we define a sequence of inverse temperatures $0 = \beta_0 < \beta_1 < \cdots < \beta_{\ell}= \infty$, and express~$Z(\infty)$ as the telescoping product
\begin{equation}
	\label{eq:telescoping-product}
	Z(\infty) = Z(0) \cdot \frac{Z(\beta_1)}{Z(0)} \cdot \frac{Z(\beta_2)}{Z(\beta_1)} \cdot \raisebox{-.2em}{$\cdots$} \cdot \frac{Z(\infty)}{Z(\beta_{\ell})} = Z(0) \prod_{i=0}^{\ell-1} \frac{Z(\beta_{i+1})}{Z(\beta_i)}.
\end{equation}
Since throughout the sequence of $\beta_i$'s, we are increasing the inverse temperature, and hence decreasing the temperature, this sequence is referred to as the \textit{cooling schedule}. Its length $\ell$ is called the \textit{schedule length}. The core idea for estimating $Z(\infty)$ is to evaluate each of the factors in the product on the right-hand side of Equation~(\ref{eq:telescoping-product}) individually.

Thus, let $i \in [\ell]$. We endow $\Omega$ with a probability distribution $\pi_{\beta_i}$, called the Gibbs distribution in statistical physics, and define a random variable $X_i : \Omega \to \mathbb{R}$ on it, with
  \[\pi_{\beta_i}(x) = \frac{e^{-\beta_iH(x)}}{Z(\beta_i)}, \qquad \text{and} \qquad X_i(x) = e^{-(\beta_{i+1} - \beta_i)H(x)}.\]
It follows immediately that
  \[\ex{X_i} = \sum_{x \in \Omega} \frac{e^{-\beta_i H(x)}}{Z(\beta_i)} \cdot e^{-(\beta_{i+1} - \beta_i)H(x)} = \sum_{x \in \Omega} \frac{e^{-\beta_{i+1} H(x)}}{Z(\beta_i)} = \frac{Z(\beta_{i+1})}{Z(\beta_i)}.\]
The idea is now to devise a procedure that samples from the Gibbs distribution and obtains an estimator $\mut_i$ of~$\ex{X_i}$.

In order to be able to sample from the Gibbs distribution, we encode it in a quantum state. To that end, we define the Gibbs state $\ket{\pi_{\beta_i}} \in \C^{\Omega}$ at inverse temperature $\beta_i$, defined as
\[\ket{\pi_{\beta_i}} = \frac{1}{\sqrt{Z(\beta_i)}} \sum_{x \in \Omega} \sqrt{e^{-\beta_iH(x)}}\ket{x}.\]
In typical applications (see the next subsection), it can be much easier to reflect through the Gibbs state than to prepare it. Therefore, we estimate $\ex{X_i}$ using the nondestructive mean estimation algorithm constructed in the previous sections. This has the benefit of restoring all the Gibbs state needed in the computation of $\mut_i$, which makes it possible to reuse them for computing the next factor in the products on the right-hand side of Equation~(\ref{eq:telescoping-product}).

It remains to choose the cooling schedule, i.e., the inverse temperatures $\beta_1, \dots, \beta_{\ell-1}$. For reasons that are sketched on high level, we want our cooling schedule to have the following two properties.
\begin{enumerate}
	\item \textit{$B$-Chebyshev}. For any $B \geq 1$, a cooling schedule is called $B$-Chebyshev if for any two subsequent $\beta_i$ and $\beta_{i+1}$, we have
	 \[\frac{\ex{X_i^2}}{\ex{X_i}^2} = \frac{Z(2\beta_{i+1} - \beta_i)Z(\beta_i)}{Z(\beta_{i+1})^2} \leq B.\]
	This requirement can be viewed as a tail bound -- it tells us that $X_i$ concentrates well around its mean~$\ex{X_i}$.
	\item \textit{$B$-slowly varying}. For any $B \geq 1$, a cooling schedule is called $B$-slowly varying if
	\[\abs[\big]{\ip{\pi_{\beta_i}}{\pi_{\beta_{i+1}}}}^2 = \frac{Z\pt[\big]{\frac{\beta_i + \beta_{i+1}}{2}}^2}{Z(\beta_i)Z(\beta_{i+1})} \geq 1/B.\]
	This requirement can be seen as a proximity (or \emph{warm start}) requirement -- it tells us that the Gibbs states $\ket{\pi_{\beta_i}}$ and $\ket{\pi_{\beta_{i+1}}}$ have some non-negligible overlap. It ensures that we can \textit{anneal}, i.e., transform, $\ket{\pi_{\beta_i}}$ into $\ket{\pi_{\beta_{i+1}}}$ without having to do too much work, and hence circumvents having to prepare $\ket{\pi_{\beta_{i+1}}}$ from scratch for estimating the next mean $\ex{X_{i+1}}$.
\end{enumerate}
When we set $B = e^2$, it is shown in \cite{SVV09j} that there exists a cooling schedule of length $\ell = \wbo{\sqrt{\log|\Omega|\log(n)}}$, which is $B$-Chebyshev. Subsequently, it was shown in \cite{HW20c} that one can modify the construction so that the resulting cooling schedule is not only $B$-Chebyshev, but also slowly varying. Moreover, computing this cooling schedule can be done on the fly, with associated cost scaling approximately linearly in the schedule length. Combining this work with our product estimator, Theorem~\ref{Thm:prodEst}, gives rise to the following result.

\begin{theorem}[\sc Partition function estimator]
  \label{Thm:QSA}
  Let $H : \Omega \to \itv{0}{n}$ be a Hamiltonian, and let $Z$ be its partition function. Suppose that, for every inverse temperature~$\beta$, the associated Gibbs distribution $\pi_{\beta}$ is the stationary distribution of an ergodic reversible Markov chain with spectral gap at least~$\delta$. Then, we can estimate~$Z(\infty)$ up to multiplicative error $\eps$, with probability at least~$2/3$, by using
    \[\wbo[\Big]{\pt[\big]{\log^{3/4}|\Omega|\log^{3/4}(n)/\eps + \sqrt{\log|\Omega|}\log^{3/2} n}/\sqrt{\delta}}\]
  steps of the quantum walk operator in expectation.
\end{theorem}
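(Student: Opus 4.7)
\medskip

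\noindent\textbf{Proof proposal.} The plan is to assemble three ingredients that have already been put in place: (i) the construction of a short cooling schedule due to \cite{HW20c,AHN22j}, (ii) the product estimator of Theorem~\ref{Thm:prodEst} applied to the telescoping product in Equation~(\ref{eq:telescoping-product}), and (iii) the quadratic speed-up that converts reflections through the Gibbs states $\ket{\pi_{\beta_i}}$ into steps of the underlying quantum walk operator. Concretely, set $B = e^2$ and invoke the schedule generation procedure of \cite{HW20c,AHN22j}; this returns, on the fly, a sequence $0 = \beta_0 < \beta_1 < \dots < \beta_\ell = \infty$ of length $\ell = \wbo{\sqrt{\log|\Omega|\log(n)}}$ that is simultaneously $B$-Chebyshev and $B$-slowly varying, together with the corresponding qsamples $\ket{\pi_{\beta_i}}$. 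The associated cost, expressed in quantum-walk steps, is $\wbo{\sqrt{\log|\Omega|/\delta}\cdot\polylog(n)}$, which in our target bound is absorbed in the additive term $\sqrt{\log|\Omega|}\log^{3/2}(n)/\sqrt{\delta}$.

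Next I would feed the random variables $X_1,\dots,X_\ell$ defined above into the product estimator of Theorem~\ref{Thm:prodEst} with the constant relative bound $B = e^2$ and relative accuracy $\eps$. The $B$-Chebyshev property gives the required bound on the relative second moment of each $X_i$, while the $B$-slowly varying property gives the required bound on the fidelity $|\ip{\pi_{X_i}}{\pi_{X_{i+1}}}|^2 \geq 1/B$ needed so that the annealing subroutine \anneal\ (Lemma~\ref{Lem:ndAnneal}) can transport a constant number of copies of $\ket{\pi_{X_i}}$ into $\ket{\pi_{X_{i+1}}}$ at constant amortized cost in reflections. Theorem~\ref{Thm:prodEst} then yields an estimate $\mut$ of $\ex{\prod_i X_i} = Z(\infty)/Z(0)$ to within multiplicative error $\eps$ with probability at least $2/3$, at the expense of $\wbo{B^2\ell^{3/2}/\eps + B\ell\log n}$ reflections through the Gibbs states. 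Multiplying by $Z(0) = |\Omega|$ (which is known exactly) gives the desired estimate of $Z(\infty)$.

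Finally I would account for the cost conversion. Each reflection $\id - 2\proj{\pi_{\beta_i}}$ can be approximated to any inverse polynomial precision using $\wbo{1/\sqrt{\delta}}$ steps of the Szegedy/MNRS quantum walk operator associated with the Markov chain at inverse temperature $\beta_i$ (this is the standard quadratic speed-up on spectral gap; the precision required here is inverse polynomial in $\ell/\eps$ and thus introduces only polylogarithmic overhead). Substituting $\ell = \wbo{\sqrt{\log|\Omega|\log(n)}}$ yields
\[\ell^{3/2} = \wbo{\log^{3/4}|\Omega|\log^{3/4}(n)} \qquad \text{and} \qquad \ell\log n = \wbo{\sqrt{\log|\Omega|}\log^{3/2}(n)},\]
so the product-estimator contribution is $\wbo{(\log^{3/4}|\Omega|\log^{3/4}(n)/\eps + \sqrt{\log|\Omega|}\log^{3/2}(n))/\sqrt{\delta}}$ walk steps, and the schedule-generation contribution fits inside the same bound. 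Summing gives the announced complexity.

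The main obstacles in carrying this plan out rigorously are bookkeeping rather than conceptual. First, one must check that the qsamples consumed by the schedule generator and those demanded by the product estimator are compatible: because Theorem~\ref{Thm:prodEst} is nondestructive up to probability $\eps$, with a union bound over the $\ell$ stages we still restore $\ket{\pi_{\beta_i}}$ with probability close to $1$ at each step, so the $\bo{B}$ copies handed off between consecutive temperatures (via Lemma~\ref{Lem:ndAnneal}) are genuinely preserved throughout. Second, one must verify that the $\polylog$ precision with which each reflection is implemented by the quantum walk does not blow up: the number of reflections is $\wbo{\ell^{3/2}/\eps}$, so implementing each reflection to additive precision $\poly(\eps/\ell)$ in the operator norm is sufficient, and by the standard amplification of Szegedy walks this costs only $\wbo{1/\sqrt{\delta}}$ walk steps, which is precisely what is folded into the $\wbo{\cdot}$ in the statement.
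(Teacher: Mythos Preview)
Your proposal is correct and follows essentially the same route as the paper: invoke the cooling-schedule construction of \cite{HW20c} with $B=e^2$ to obtain a $B$-Chebyshev, $B$-slowly varying schedule of length $\ell=\wbo{\sqrt{\log|\Omega|\log n}}$, apply the product estimator of Theorem~\ref{Thm:prodEst} to the telescoping product, and convert each Gibbs-state reflection into $\wbo{1/\sqrt{\delta}}$ quantum-walk steps via \cite{Sze04c,MNRS11j}. The paper's own proof is in fact terser than yours, omitting the bookkeeping discussion of qsample preservation and reflection precision that you (correctly) flag as routine.
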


\begin{proof}
    We run the algorithm from \cite{HW20c} to compute the cooling schedule on the fly. The total number of Gibbs state reflections used in this step is $\wbo{\sqrt{\log|\Omega|}\log^{3/2} n}$~\cite[Theorem~13]{HW20c}, and the resulting cooling schedule is both $e^2$-slowly varying and $e^2$-Chebyshev, and of length $\ell = \wbo{\sqrt{\log|\Omega|\log n}}$. For computing the product in Equation~(\ref{eq:telescoping-product}), we use the product estimator from Theorem~\ref{Thm:prodEst}. The total number of Gibbs state reflections used in this algorithm is $\wbo{\log^{3/4}|\Omega|\log^{3/4}(n)/\eps + \sqrt{\log|\Omega|}\log^{3/2} n}$. With standard failure probability reduction techniques, we obtain that the number of Gibbs state reflections performed by the resulting algorithm scales as the sum of the two complexities. Finally, by a well-known result~\cite{Sze04c,MNRS11j}, each reflection through a Gibbs state~$\ket{\pi_{\beta}}$ can be implemented with $\bo{1/\sqrt{\delta}}$ steps of the quantum walk operator corresponding to a Markov chain with spectral gap at least $\delta$ generating~$\pi_{\beta}$.
\end{proof}

The core of our improvement lies in our product estimator -- the one used in \cite{HW20c} is quadratic in the schedule length, whereas ours is subquadratic. Thus, more generally, if we are in a setting where a cooling schedule that is shorter than $\wbo{\sqrt{\log|\Omega|\log(n)}}$ suffices, then our resulting algorithm scales with the schedule length to the power of $3/2$. A typical setting where this is the case is when we know a priori a lower bound on $Z(\infty)$ -- then we can terminate the annealing at a lower inverse temperature, and hence obtain a shorter cooling schedule. We leave working out the details in this setting for future work.

  \subsection{Applications}
  \label{Sec:applications}
  The quantum partition function estimator can be applied to several combinatorial counting and statistical physics problems. We describe some examples that are representative of these applications and for which we obtain faster algorithms than in previous work~\cite{Mon15j,CCH+19p,HW20c,AHN22j}. In all results that we list below, the range~$n$ of the Hamiltonian scales at most polylogarithmically in $|\Omega|$, so we can hide any dependence on $\log(n)$ in the tilde of the big-$O$ notation.

\paragraph{Counting $k$-colorings.} Let $G = (V,E)$ be a graph of maximum degree~$\Delta = \bo{1}$, and a number $k = \bo{1}$ of colors such that $k > 2\Delta$. We want to count the number of ways to color the vertices of $G$ such that no edge is monochromatic. We let $\Omega$ be the set of all colorings~$x \in [k]^V$, which implies that~$Z(0) = |\Omega| = k^{|V|}$, and $H(x)$ is the number of monochromatic edges in~$G$ when each vertex $v \in V$ is colored with $x_v$. This is also called the \emph{Potts model} and $Z(\infty) = |H^{(-1)}(0)|$ is equal to the number of valid $k$-colorings of $G$. Jerrum \cite{Jer95j} showed that the Glauber dynamics for the Potts model mixes in time~$O(|V|\log|V|)$. Thus, by Theorem~\ref{Thm:QSA}, we can estimate~$Z(\infty)$ up to relative error~$\eps$ in time $\wbo[\big]{\log^{3/4}(|\Omega|)\sqrt{|V|}/\eps} = \wbo{|V|^{5/4}/\eps}$.

\paragraph{Ferromagnetic Ising model.} Let $G = (V,E)$ be a graph of maximum degree~$\Delta \geq 3$. We take~$\Omega$ to be the set of all assignments~$x \in \set{-1,1}^V$ of signs~$\pm 1$ to the vertices of~$G$, which readily implies that~$|\Omega| = 2^{|V|}$. The energy~$H(x)$ is the number of edges in $E$ whose two endpoints have the same sign (when the sign of~$v \in V$ is~$x_v$). Here, since the model is ferromagnetic, the Gibbs distribution is chosen to be proportional to~$e^{\beta H(x)}$ and the partition function is $Z(\beta) = \sum_x e^{\beta H(x)}$ (the framework given in the previous section can easily be adapted to this setting). Mossel and Sly~\cite{MS13j} showed that the Glauber dynamics for the ferromagnetic Ising model mixes in time~$O(|V| \log |V|)$ in the tree uniqueness region $\beta < \ln\pt{\Delta/(\Delta-2)}$. Thus, for inverse temperatures~$\beta$ satisfying that condition, we can estimate~$Z(\beta)$ in time $\wbo{|V|^{5/4}/\eps}$.

\paragraph{Counting matchings.} Let $G = (V,E)$ be a graph of maximum degree~$\Delta = \bo{1}$. A matching $x \subseteq E$ is a subset of disjoint edges. We let $\Omega$ be the set of all matchings, and we set $H(x) = \abs{x}$ to be the number of edges in the matching $x$. This is called the \emph{monomer-dimer model}. The state space is of size $Z(0) = |\Omega| = O(|V|! 2^{|V|})$ and we have~$Z(\infty) = 1$. The setting is again slightly different here since the partition function is easy to calculate at zero temperature ($\beta = \infty$) rather than at infinite temperature ($\beta = 0$). We can still run a similar algorithm as in Theorem~\ref{Thm:QSA} by annealing in the opposite direction (see~\cite{HW20c,Mon15j} for a more detailed explanation). Chen, Liu and Vigoda~\cite[Theorem 1.5]{CLV21c} showed that the Glauber dynamics for the monomer-dimer model mixes in time $\bo{|E| \log |V|}$. Thus, we can estimate the number~$Z(0)$ of matchings in $G$ in time $\wbo{|V|^{3/4} |E|^{1/2}/\eps}$.

\paragraph{Counting independent sets.} Let $G = (V,E)$ be a graph of maximum degree~$\Delta = \bo{1}$. An independent set $x \subseteq V$ is a subset of vertices such that no edge has its two endpoints in~$x$. Similarly as before, we let $\Omega$ be the set of all independent sets and we define~$H(x) = \abs{x}$ to be the size (number of vertices) of the independent set $x$. This is also called the \emph{hard-core model}. Again, we have $Z(\infty) = 1$ and we anneal backwards to estimate $Z(0) = |\Omega|$.
Chen, Liu and Vigoda~\cite[Theorem 1.2]{CLV21c} showed that the Glauber dynamics for the hard-core model mixes in time $\bo{|V| \log |V|}$ in the tree uniqueness region $\beta > \ln\pt[\Big]{\frac{(\Delta-2)^{\Delta}}{(\Delta-1)^{\Delta-1}}}$. Thus, for inverse temperatures $\beta$ satisfying that condition, we can estimate $Z(\beta)$ in time $\wbo{|V|^{5/4}/\eps}$. In particular, we can estimate the number of independent sets when the maximum degree is $\Delta \leq 5$ (since the tree uniqueness region contains the value $\beta = 0$ in that case).

\paragraph{Computing the volume of a convex body.} We can also use our techniques to speed up the volume estimation algorithms of convex bodies from~\cite{LV06j,CCH+19p}. In this problem, we assume that we are given a radius $R > 0$ and a convex set $K \subseteq \R^d$ such that $B(1) \subseteq K \subseteq B(R)$ where~$B(r)$ is the ball of radius $r$ centered at $0 \in \R^d$. The goal is to compute an estimate $\td{V}$ such that $(1-\eps)\Vol(K) \leq \td{V} \leq (1+\eps)\Vol(K)$. The algorithm from~\cite{CCH+19p} achieves this with~$\wbo{d^3 + d^{2.5}/\eps}$ quantum queries to a membership oracle~$O_K$ to $K$, and we reduce the number of queries required to $\wbo{d^3 + d^{2.25}/\eps}$. In contrast to the other applications mentioned in this section, here we require a more significant change to the algorithm, and hence we supply the details in Appendix~\ref{App:volumeEstimation}.

\section*{Acknowledgements}
The authors thank András Gilyén for discussions on unbiased phase estimation. Y.H. was supported by the Simons Institute and DOE NQISRC QSA grant \#FP00010905.

\printbibliography[heading=bibintoc]

\begin{appendices}
  \section{Algorithmic primitives}
  \label{Sec:primitives}
  We use the next properties that characterize the output distribution of the quantum phase estimation algorithm.

\begin{lemma}[\sc Phase estimation~\cite{Kit95p,BHMT02j}]
  \label{Lem:PE}
  Let $U$ be a unitary operator with an eigenvector~$\ket{\psi}$ such that $U \ket{\psi} = e^{2\pi i \theta} \ket{\psi}$ where $\theta \in \bp{0,1}$. Given a power-of-two integer $t$, the \emph{phase estimation} algorithm outputs an estimate $\tht = i/t$ where the number $i \in \itv{0}{t-1}$ is chosen with probability
    \begin{equation}
      \label{Eq:PE}
      p(i) = \frac{\sin^2(t \Delta_i \pi)}{t^2\sin^2(\Delta_i \pi)}
    \end{equation}
  and $\Delta_i = \min\set*{\abs{\vp - i/t},\abs{1 + \vp - i/t},\abs{1 -\vp + i/t}}$ is the distance between $\vp$ and $i/t$ along the unit circle. The algorithm needs one copy of $\ket{\psi}$, which is restored at the end of the computation, and $\bo{t}$ controlled-$U$ operations.
\end{lemma}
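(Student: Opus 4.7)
The plan is to implement the standard Kitaev phase estimation circuit and compute the measurement distribution directly. Let $\tau = \log_2 t$, and introduce an ancilla register of $\tau$ qubits. Applying Hadamard gates prepares the uniform superposition $\frac{1}{\sqrt{t}} \sum_{k=0}^{t-1} \ket{k} \otimes \ket{\psi}$. Next, apply the controlled unitary $\sum_{k=0}^{t-1} \op{k}{k} \otimes U^k$; since $\ket{\psi}$ is an eigenvector of~$U$, this produces the unentangled state $\frac{1}{\sqrt{t}} \sum_{k=0}^{t-1} e^{2\pi i k \theta} \ket{k} \otimes \ket{\psi}$. The standard decomposition $U^k = \prod_{m : k_m = 1} U^{2^m}$ shows that implementing this step requires $\bo{t}$ controlled-$U$ calls overall (from $\sum_{m=0}^{\tau-1} 2^m = t-1$).

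I would then apply the inverse quantum Fourier transform to the ancilla, yielding
\[
\frac{1}{t} \sum_{j=0}^{t-1} \pt[\bigg]{\sum_{k=0}^{t-1} e^{2\pi i k (\theta - j/t)}} \ket{j} \otimes \ket{\psi},
\]
and measure the ancilla in the computational basis. The probability of outcome $j$ is then the squared magnitude of the geometric sum inside the parentheses, divided by $t^2$. Using $\sum_{k=0}^{t-1} e^{2\pi i k \alpha} = (1 - e^{2\pi i t \alpha})/(1 - e^{2\pi i \alpha})$ and simplifying gives $p(j) = \frac{\sin^2(\pi t(\theta - j/t))}{t^2 \sin^2(\pi(\theta - j/t))}$, with the standard convention that the quotient is set to $1$ when $\theta = j/t$.

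To match the formula stated in terms of $\Delta_j$, I would observe that $\sin^2(\pi x)$ is periodic with period~$1$, so the probability only depends on $\theta - j/t$ modulo~$1$; taking the representative in $[-1/2,1/2]$ yields exactly $\pm\Delta_j$ (the minimal circle distance), and hence $p(j) = \sin^2(t\Delta_j \pi)/(t^2 \sin^2(\Delta_j \pi))$ as required.

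For the nondestructive property, I would emphasize that throughout the circuit, the $\ket{\psi}$ register is never entangled with the ancilla: after the controlled-$U$ step the joint state factors as (ancilla state) $\otimes\ket{\psi}$, and the inverse QFT and final measurement touch only the ancilla. Thus $\ket{\psi}$ is preserved exactly, and only one copy is ever consumed. The only mildly delicate point in the write-up is the bookkeeping for the $\Delta_j$ identification at the boundary cases (when $t\theta$ is an integer), which is handled by the continuous extension of $\sin^2(\pi t x)/\sin^2(\pi x)$ at $x = 0$.
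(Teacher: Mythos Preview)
Your proof is correct and is exactly the standard textbook derivation of the phase estimation output distribution. Note, however, that the paper does not actually prove this lemma: it is stated as a cited result from~\cite{Kit95p,BHMT02j} with no proof given, so there is nothing to compare against beyond observing that your argument matches the classical analysis those references contain.
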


\begin{corollary}
  \label{Cor:phaseProba}
  If $t \geq 8$ then the probability distribution $(p(i))_i$ defined in Equation~(\ref{Eq:PE}) satisfies
  $p(i) \geq 0.22$ when $\Delta_i \leq 5/(8t)$ and $p(i) \leq 0.11$ when $\Delta_i \geq 1/t$.
\end{corollary}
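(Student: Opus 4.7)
The plan is to set $\alpha = t\Delta_i$ so that $p(i) = \sin^2(\alpha\pi)/(t^2 \sin^2(\alpha\pi/t))$, with $\alpha \in [0,t/2]$ and $\alpha\pi/t \in [0,\pi/2]$, and then to handle the two bounds by separate elementary arguments on this single-variable expression. The only analytic ingredient I will rely on is that $z \mapsto \sin(z)/z$ is monotonically decreasing on $(0,\pi]$, which follows because $(\sin z - z\cos z)' = z\sin z \geq 0$ on $[0,\pi]$ and the function vanishes at $0$.

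For the lower bound (when $\alpha \leq 5/8$), I would apply $\sin(z) \leq z$ to the denominator to get $p(i) \geq (\sin(\alpha\pi)/(\alpha\pi))^2$, and then use the monotonicity of $\sin(z)/z$ on $(0,\pi]$ with $\alpha\pi \leq 5\pi/8 < \pi$ to lower-bound this by $(\sin(5\pi/8)/(5\pi/8))^2 = (\cos(\pi/8)/(5\pi/8))^2 = 16(2+\sqrt{2})/(25\pi^2)$, which is numerically just above $0.22$. For the upper bound (when $\alpha \geq 1$), I would bound the numerator trivially by $\sin^2(\alpha\pi) \leq 1$ and, since $\sin$ is increasing on $[0,\pi/2]$ and $\alpha\pi/t \in [\pi/t,\pi/2]$, use $\sin(\alpha\pi/t) \geq \sin(\pi/t)$ to obtain $p(i) \leq 1/(t^2 \sin^2(\pi/t))$. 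Because $t\sin(\pi/t) = \pi \cdot (\sin(\pi/t)/(\pi/t))$ is increasing in $t$ (once more by the monotonicity fact), this bound is largest at $t = 8$, yielding $p(i) \leq 1/(64\sin^2(\pi/8)) = 1/(32 - 16\sqrt{2}) \approx 0.107$.

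The main obstacle I anticipate is that both inequalities are tight to within about a percent, so the way I discard $\alpha$-dependence matters. In particular, a natural attempt for the upper bound, namely using Jordan's inequality $\sin(\alpha\pi/t) \geq 2\alpha/t$ to retain the denominator's dependence on $\alpha$ and arrive at $p(i) \leq \sin^2(\alpha\pi)/(4\alpha^2)$, actually fails: its worst case near $\alpha \approx 1.43$ is about $0.116 > 0.11$. The correct move, counterintuitively, is the coarser one of dropping the $\alpha$-dependence of the denominator entirely in favor of $\sin(\pi/t)$, which is rescued by $t\sin(\pi/t)$ being comfortably close to $\pi$ already at the smallest allowed value $t = 8$.
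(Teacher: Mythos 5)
Your proof is correct and follows essentially the same route as the paper: for the lower bound, bound the denominator via $\sin(x)\le x$ and use monotonicity of $\sin(z)/z$ to evaluate at $5\pi/8$; for the upper bound, bound the numerator by $1$, use $\Delta_i\pi\in[\pi/t,\pi/2]$ to reduce to $1/(t\sin(\pi/t))^2$, and evaluate at $t=8$. Your explicit justification that $t\sin(\pi/t)$ is increasing in $t$ just makes precise a step the paper leaves implicit.
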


\begin{proof}
  Suppose first that $\Delta_i \leq 5/(8t)$. Using the inequality $\sin(x) \leq x$ when $0 \leq x \leq \pi/2$ and the fact that $y \mapsto (\sin(y)/y)^2$ is non-increasing on $[0,5\pi/8]$, we have $p(i) \geq (\sin(t \Delta_i \pi)/(t \Delta_i \pi))^2 \geq (\sin(5\pi/8)/(5\pi/8))^2 \geq 0.22$. Suppose now that $\Delta_i \geq 1/t$. Using that $\Delta_i \leq 1/2$ and $t \geq 8$, we have $p(i) \leq 1/(t \sin(\pi/t))^2 \leq 1/(8 \sin(\pi/8))^2 \leq 0.11$.
\end{proof}

We consider a variant of quantum amplitude amplification~\cite{BHMT02j} that provides a precise linear amplification of the amplitude.

\begin{lemma}[{\sc Linear amplitude amplification}, Theorem 6.10 in \cite{Low17d}]
  \label{Lem:AA-SVT}
  Consider two reals $t \geq 1$ and $\eps \in (0,1)$. Let $\ket{\psi}$ be a quantum state and~$\Pi$ be a projection operator such that $\norm{\Pi \ket{\psi}} \leq 1/(2t)$ if $t > 1$. Then, there is a unitary operator $V_{t,\eps}$ such that
    \[\abs[\big]{\norm{\Pi V_{t,\eps} \ket{\psi}} - t \norm{\Pi \ket{\psi}}} \leq \eps\]
  which can be implemented with $\bo{t \log(1/\eps)}$ applications of the reflection operators $\id - 2\proj{\psi}$ and $\id - 2\Pi$.
\end{lemma}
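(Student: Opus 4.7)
The plan is to prove this by invoking the quantum singular value transformation (QSVT) framework, which provides a general recipe for applying a bounded polynomial to the singular values of a block-encoded operator via a sequence of reflections. First I would observe that the two reflections $R_\psi = \id - 2\proj{\psi}$ and $R_\Pi = \id - 2\Pi$ furnish precisely such a block-encoding of $\sqrt{p} := \norm{\Pi\ket{\psi}}$: in the two-dimensional Jordan subspace spanned by $\Pi\ket{\psi}/\sqrt{p}$ and $(\id-\Pi)\ket{\psi}/\sqrt{1-p}$, the product $R_\Pi R_\psi$ is a rotation whose angle encodes $\sqrt{p}$. Consequently, any odd polynomial $P$ with $|P(x)|\leq 1$ on $[-1,1]$ can be implemented in the sense that some unitary $V_P$, built from $\bo{\deg P}$ alternating applications of $R_\psi$ and $R_\Pi$ interleaved with single-qubit phases, satisfies $\norm{\Pi V_P \ket{\psi}} = |P(\sqrt{p})|$.

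The main work is then a polynomial approximation argument: I need an odd polynomial $P$ of degree $d = \bo{t\log(1/\eps)}$ such that $|P(x) - tx| \leq \eps$ for all $x \in [0, 1/(2t)]$ while $|P(x)| \leq 1$ on the whole interval $[-1,1]$. The natural construction is to multiply the identity function by a smoothed rectangular window that equals $1$ on $[-1/(2t),1/(2t)]$ and decays rapidly outside, then truncate the resulting Chebyshev series. Such windows (built, for instance, from error-function or Jackson kernels) have exponentially small tails, so a degree of order $t\log(1/\eps)$ suffices to make both the approximation error inside the window and the values outside it at most $\eps$; the scaling comes from the window having width $\Theta(1/t)$, which dictates the number of Chebyshev coefficients needed.

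Combining the two pieces yields the lemma: the hypothesis $\sqrt{p} \leq 1/(2t)$ places $\sqrt{p}$ inside the approximation window, so $V_{t,\eps} := V_P$ satisfies $\bigl|\norm{\Pi V_{t,\eps}\ket{\psi}} - t\sqrt{p}\bigr| = \bigl||P(\sqrt{p})| - t\sqrt{p}\bigr| \leq \eps$, and the circuit uses $\bo{d} = \bo{t \log(1/\eps)}$ reflections through $\ket{\psi}$ and through $\Pi$, as promised.

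The principal obstacle is constructing the polynomial $P$ so that the two conflicting constraints (linear growth on the small window, boundedness on the full interval $[-1,1]$) are simultaneously met within the stated degree, since naively truncating $tx$ in Chebyshev basis does not yield a uniformly bounded polynomial. This technical point is exactly what is handled in Theorem 6.10 of~\cite{Low17d}, whose construction one can invoke as a black box rather than redoing the window design here.
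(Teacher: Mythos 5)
The paper gives no proof of this lemma at all---it is imported verbatim as Theorem~6.10 of~\cite{Low17d}---and your sketch (Jordan/two-dimensional subspace structure of the two reflections, an odd degree-$\bo{t\log(1/\eps)}$ polynomial approximating $tx$ on the window $[0,1/(2t)]$ while bounded by $1$ on $[-1,1]$, implemented by quantum signal processing) is exactly the standard argument behind that cited theorem. So your proposal is correct and takes essentially the same route as the source the paper relies on; just note that deferring the window-polynomial construction to Theorem~6.10 itself is mildly circular, and that the phased reflections required by QSP are implemented from the stated reflection oracles in the usual way.
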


The next two results provide efficient quantum algorithms for converting between phase and amplitude encodings of a real parameter.

\begin{lemma}[{\sc Amplitude-to-Phase conversion}, Theorem 14 in \cite{GAW17p}]
  \label{Lem:PO}
  Let $\ket{\psi}$ be a quantum state and~$\Pi$ be a projection operator with $p = \norm{\Pi \ket{\psi}}^2$. Then, given a real $\eps \in (0,1)$, there is a unitary operator $\pora_{\eps}$ and an integer $a = \bo{\log\log 1/\eps}$ such that
    \[\pora_{\eps} \pt[\big]{\ket{\psi}\ket{0}^{\otimes a}} = \ket{\psi} \ket{\varphi_p} \quad \text{where} \quad \norm{\ket{\varphi_p} - e^{i p}\ket{0}^{\otimes a}} \leq \eps\]
  and $\pora_{\eps}$ can be implemented with $\bo{\log(1/\eps)}$ applications of the reflection operators $\id - 2\proj{\psi}$ and $\id - 2\Pi$.
\end{lemma}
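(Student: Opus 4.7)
My plan is to prove Lemma~\ref{Lem:PO} using the quantum singular value transformation (QSVT) framework applied to a Grover-type qubitization built from the available reflections. Set $R_{\psi} = \id - 2\proj{\psi}$ and $R_{\Pi} = \id - 2\Pi$, and form the walk operator $G = -R_{\psi}R_{\Pi}$. On the two-dimensional invariant subspace containing $\ket{\psi}$, the operator $G$ has eigenvalues $e^{\pm 2i\theta}$ with $\sin(\theta) = \sqrt{p}$; equivalently, $(R_{\psi}, R_{\Pi})$ qubitizes a block encoding whose accessible singular value on the $\ket{\psi}$ side is $x = \sqrt{p}$. Each application of $G$ costs one call to each reflection.

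The next step is to design a polynomial approximation of the target $f(x) = e^{i x^2}$ on $[-1,1]$. Split $f$ as $f(x) \approx P_c(x) + i\, P_s(x)$, with $P_c \approx \cos(x^2)$ and $P_s \approx \sin(x^2)$ both real, of even parity, and bounded by $1$ in modulus. Because $e^{iy}$ is entire, truncating the Taylor series in $y = x^2$ at degree $d$ gives uniform error $\bo{1/(d+1)!}$, so $d = \bo{\log(1/\eps)/\log\log(1/\eps)}$ terms already suffice, producing polynomials in $x$ of total degree $\bo{\log(1/\eps)}$. Feeding these into the standard QSVT construction on our qubitization yields block encodings of $P_c(\sqrt{p})$ and $P_s(\sqrt{p})$ that each use $\bo{\log(1/\eps)}$ calls to $R_{\psi}$ and $R_{\Pi}$, plus $\bo{1}$ ancilla qubits.

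I would then combine the two block encodings via a one-qubit linear combination of unitaries (LCU) to obtain a block encoding of $P_c(\sqrt{p}) + i P_s(\sqrt{p})$. Since $|P_c(x) + i P_s(x)|^2 = 1 \pm \bo{\eps}$ uniformly, the resulting operator differs in operator norm from a unitary acting as the scalar $e^{i p}$ on $\ket{\psi}$ by $\bo{\eps}$; a single round of robust oblivious amplitude amplification then produces the desired unitary $\pora_{\eps}$ satisfying $\pora_{\eps}\pt[\big]{\ket{\psi}\ket{0}^{\otimes a}} = \ket{\psi}\ket{\varphi_p}$ with $\norm{\ket{\varphi_p} - e^{ip}\ket{0}^{\otimes a}} \leq \eps$. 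The ancilla register collects the $\bo{1}$ QSVT block-encoding qubit, the single LCU qubit, and $\bo{\log\log(1/\eps)}$ classical-precision qubits needed to store QSVT phase angles for a degree-$\bo{\log(1/\eps)}$ polynomial, summing to $a = \bo{\log\log(1/\eps)}$.

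The main obstacle I anticipate is the joint normalization of $P_c$ and $P_s$: QSVT requires each to be individually bounded by $1$ in modulus, while oblivious amplitude amplification wants $P_c^2 + P_s^2$ to be nearly constant; a naive Taylor truncation does not automatically achieve both simultaneously, and propagating the error in $P_c^2 + P_s^2 - 1$ through amplification requires care. The cleanest resolution, which I would adopt, is to invoke the unified framework of~\cite{GAW17p}, which packages polynomial design, LCU combination, and amplification into a single stable routine; this reduces the problem to the purely approximation-theoretic task of finding a degree-$\bo{\log(1/\eps)}$ polynomial that $\eps$-approximates $e^{ix^2}$, which the Taylor truncation above supplies.
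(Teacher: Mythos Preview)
The paper does not supply its own proof of this lemma: it is stated in Appendix~\ref{Sec:primitives} purely as a citation to Theorem~14 of~\cite{GAW17p}, with no accompanying argument. Your QSVT-based sketch is the right reconstruction and is essentially the route the cited reference takes (build a qubitization from the two reflections, approximate $x\mapsto e^{ix^2}$ by a low-degree polynomial, implement it via QSVT/LCU), so there is nothing substantive to contrast.

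One small inaccuracy worth flagging: your accounting for the ancilla count $a = \bo{\log\log(1/\eps)}$ is off. QSVT phase angles are \emph{classical} rotation parameters baked into single-qubit gates, not values held in a quantum register, so ``$\bo{\log\log(1/\eps)}$ qubits to store QSVT phase angles'' is not where the ancillas come from. In the construction of~\cite{GAW17p} the extra $\bo{\log\log(1/\eps)}$ ancillas arise from their specific phase-oracle implementation (an iterative/precision-doubling scheme), not from phase-angle storage. This does not affect the correctness of your overall plan, only the justification of the stated ancilla bound.
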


\begin{lemma}[{\sc Phase-to-Amplitude conversion}, Lemma~16 in \cite{GAW17p}]
  \label{Lem:PrO}
  Let $\pora$ be a unitary operator with an eigenvector~$\ket{\psi}$ such that $\pora \ket{\psi} = e^{i \theta} \ket{\psi}$ where $\theta \in \bc{1/4,3/4}$. Then, given a real $\eps \in (0,1)$, there is a unitary operator $V_{\eps}$ and an integer $a = \bo{\log 1/\eps}$ such that
    \[V_{\eps} \pt[\big]{\ket{\psi}\ket{0}^{\otimes a}} = \ket{\psi} \pt[\big]{\sqrt{1-p'}\ket{0}^{\otimes a} + \sqrt{p'}\ket{\Phi^{\perp}}} \quad \text{where} \quad \text{$\abs{\sqrt{p'} - \sqrt{p}} \leq \eps$ and $\ip{0}{\Phi^{\perp}} = 0$}\]
  and $V_{\eps}$ can be implemented with $\bo{\log(1/\eps)}$ applications of the (controlled) $\pora$ and $\pora^{\dag}$ operators.
\end{lemma}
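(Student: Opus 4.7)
The plan is to apply quantum singular value transformation (QSVT) to a block encoding derived from~$\pora$. First I would construct a one-qubit block encoding of the Hermitian operator $A = (\pora + \pora^{\dag})/2$ via a standard LCU gadget (Hadamard -- controlled-$\pora$/$\pora^{\dag}$ -- Hadamard). On the eigenvector $\ket{\psi}$, this block encoding acts as multiplication by $\cos\theta$, and since $\theta \in \bc{1/4,3/4}$ we have $\cos\theta \in \bc{\cos(3/4),\cos(1/4)}$, strictly bounded away from the singular points $\pm 1$.

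Next I would use QSVT to realize the polynomial transformation $\cos\theta \mapsto P(\cos\theta)$ for a real polynomial $P$ of degree $d = \bo{\log(1/\eps)}$ satisfying $\abs{P(\cos\theta) - \sqrt{\theta}} \leq \eps$ for every $\theta \in \bc{1/4,3/4}$ and $\abs{P(x)} \leq 1$ on $\bc{-1,1}$. Such a $P$ exists because $x \mapsto \sqrt{\arccos x}$ is real-analytic on the closed subinterval $\bc{\cos(3/4),\cos(1/4)}$, so its truncated Chebyshev series there converges exponentially; multiplying by a smooth window polynomial localized on this subinterval then controls the global sup-norm while adding only $\bo{\log(1/\eps)}$ extra degree. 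QSVT implements this transformation using $\bo{d} = \bo{\log(1/\eps)}$ queries to the block encoding, hence $\bo{\log(1/\eps)}$ applications of controlled-$\pora$ and controlled-$\pora^{\dag}$.

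Because $\ket{\psi}$ is an exact eigenvector of both $\pora$ and $\pora^{\dag}$, the QSVT circuit leaves the $\ket{\psi}$ register untouched and rotates only the $a = \bo{\log(1/\eps)}$ ancillary qubits, yielding a state of the required tensor-product form $\ket{\psi}\pt[\big]{P(\cos\theta)\ket{0}^{\otimes a} + \sqrt{1 - P(\cos\theta)^2}\ket{\Phi^{\perp}}}$ with $\ip{0}{\Phi^{\perp}} = 0$. Setting $\sqrt{p'} := P(\cos\theta)$ then gives the promised bound $\abs{\sqrt{p'} - \sqrt{p}} \leq \eps$ (with $p = \theta$).

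The main obstacle is exhibiting a polynomial $P$ with all three of the following properties simultaneously: $\eps$-accuracy on the target subinterval, sup-norm at most~$1$ on all of $\bc{-1,1}$, and the definite parity that QSVT demands. A naive Chebyshev truncation of $\sqrt{\arccos x}$ tuned to the target subinterval can easily exceed~$1$ elsewhere, which would violate the QSVT input assumption. I would handle this by the standard window-multiplication trick (multiplying by an erf-like polynomial supported near the target subinterval) combined with parity-symmetrization of the result; both operations preserve the $\bo{\log(1/\eps)}$ degree bound and thus the overall query complexity.
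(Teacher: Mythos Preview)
The paper does not supply its own proof of this lemma; it is quoted verbatim as Lemma~16 of~\cite{GAW17p} and used as a black box in Algorithm~\ref{Alg:UPE}. So there is no in-paper argument to compare against. Your QSVT-based sketch is a legitimate way to establish the result and is in the same spirit as the techniques of the cited reference (Gily\'en, Arunachalam and Wiebe build exactly these kinds of polynomial transformations of block encodings). The key observations you make---that $\ket{\psi}$ is a simultaneous eigenvector of $\pora$ and $\pora^{\dag}$, so the system register factorizes out of the entire QSVT circuit, and that $\sqrt{\arccos(\cdot)}$ is analytic on the relevant subinterval so a degree-$\bo{\log(1/\eps)}$ approximant exists---are the right ones.

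One cosmetic slip: in the lemma the coefficient of $\ket{0}^{\otimes a}$ is $\sqrt{1-p'}$, not $\sqrt{p'}$, so if your QSVT outputs $P(\cos\theta)\ket{0}^{\otimes a} + \cdots$ with $P(\cos\theta)\approx\sqrt{\theta}$, then setting $\sqrt{p'}:=P(\cos\theta)$ gives the roles of the flagged and unflagged parts reversed. The fix is trivial---either approximate $\sqrt{1-\theta}$ instead (equally analytic on $[1/4,3/4]$), or append a NOT-type gate on the flag ancilla---but as written the identification does not match the statement. Also, your handling of the parity and global sup-norm constraints via windowing plus parity-symmetrization is correct in outline; you might note explicitly that combining the even and odd parts costs one additional LCU ancilla, which is already absorbed into $a=\bo{\log(1/\eps)}$.
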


Finally, we explain how to anneal from one state to the next, given that there is a non-negligible overlap between the two. The idea is due to Marriott and Watrous~\cite{MW05j} and is similar to the coin flipping algorithm (Proposition~\ref{Prop:coin}).

\begin{algorithm}[H]
	\caption{Annealing from $\ket{\psi}$ to $\ket{\phi}$, $\anneal(\ket{\psi},\ket{\phi})$}
	\label{Alg:annealing}
	\begin{algorithmic}[1]
		\State Start with the state $\ket{\psi}$.
		\Repeat
			\State Measure in the $\{\ket{\psi}\bra{\psi}, \id - \ket{\psi}\bra{\psi}\}$ basis.
			\State Measure in the $\{\ket{\phi}\bra{\phi}, \id - \ket{\phi}\bra{\phi}\}$ basis.
		\Until{the final measurement outcome is $\ket{\phi}$.}
	\end{algorithmic}
\end{algorithm}

\begin{lemma}[\sc Probabilistic annealing]
  \label{Lem:ndAnneal}
  Let $\ket{\psi}$ and $\ket{\phi}$ be two linearly independent quantum states of equal dimension. Then the expected number of reflections through $\ket{\psi}$ and $\ket{\phi}$ performed by Algorithm~\ref{Alg:annealing} is $1 + 1/(2|\ip{\psi}{\phi}|^2)$.
\end{lemma}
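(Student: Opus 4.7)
The plan is to reduce the analysis to a finite Markov chain on four vectors of the two-dimensional subspace $V = \mathrm{span}\{\ket{\psi},\ket{\phi}\}$. Since both projective measurements preserve $V$ and the starting state $\ket{\psi}$ lies in $V$, the entire trajectory stays inside $V$. I would introduce the unit vectors $\ket{\psi^\perp}$ and $\ket{\phi^\perp}$ orthogonal within $V$ to $\ket{\psi}$ and $\ket{\phi}$ respectively, and set $p := |\ip{\psi}{\phi}|^2$. Expanding $\ket{\phi}$ in the $\{\ket{\psi},\ket{\psi^\perp}\}$ basis gives at once the four relevant transition overlaps $|\ip{\psi^\perp}{\phi^\perp}|^2 = p$ and $|\ip{\psi}{\phi^\perp}|^2 = |\ip{\psi^\perp}{\phi}|^2 = 1-p$, which determine all transition probabilities.

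Next I would trace the algorithm through this chain. The first iteration is deterministic in its first half (step 3 leaves $\ket{\psi}$ unchanged), and step 4 either halts with probability $p$ or sends the state to $\ket{\phi^\perp}$ with probability $1-p$. The key Marriott--Watrous observation is that from $\ket{\phi^\perp}$, any subsequent loop iteration halts with probability exactly $r := 2p(1-p)$, contributed by the two halting paths $\ket{\phi^\perp}\to\ket{\psi}\to\ket{\phi}$ and $\ket{\phi^\perp}\to\ket{\psi^\perp}\to\ket{\phi}$, each weighted by $p(1-p)$. Crucially, whenever an iteration does not halt, the post-measurement state returns to $\ket{\phi^\perp}$, so once past the first iteration the number of remaining passes is a geometric random variable with success probability $r$.

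Summing the series, the expected number of loop passes is
\[
1 + (1-p)\cdot\frac{1}{r} = 1 + \frac{1-p}{2p(1-p)} = 1 + \frac{1}{2p},
\]
which yields the stated bound once iterations are identified with the reflection count in the statement. The main subtlety, and the only point requiring real care, is the bookkeeping between loop passes and individual reflections: step 3 of the very first iteration acts trivially because the state is already an eigenvector of $\proj{\psi}$, so one has to verify that the paper's counting convention of ``reflections through $\ket{\psi}$ and $\ket{\phi}$'' matches the per-iteration accounting used above. Everything else is a standard geometric-series manipulation.
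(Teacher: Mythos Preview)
Your proposal is correct and follows essentially the same approach as the paper's proof: both observe that the first iteration's $\psi$-measurement is trivial and halts with probability $p$, that every subsequent iteration starts from $\ket{\phi^\perp}$ and halts with probability $2p(1-p)$, and then sum the resulting geometric series to obtain $1 + 1/(2p)$. Your write-up is slightly more explicit in setting up the two-dimensional subspace and computing the four overlaps, and you rightly flag the bookkeeping between ``iterations'' and ``reflections'' that the paper leaves implicit.
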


\begin{proof}
	Let $p = |\ip{\psi}{\phi}|^2$. Since the first iteration starts in state $\ket{\psi}$, the outcome of the first measurement is deterministic, and hence we will terminate with probability $p$. Afterward, every iteration starts in a state $\ket{\phi^{\perp}}$, which is orthogonal to $\ket{\phi}$ and in the 2D-subspace spanned by $\ket{\psi}$ and $\ket{\phi}$. Thus, the algorithm terminates with probability $f = 2p(1-p)$, in this iteration. Putting everything together, let $T$ be the number of iterations the algorithm uses. Then, $\ex{T-1|T\geq2} = \sum_{t=2}^{\infty} (t-1)f(1-f)^{t-2} = 1/f$, and so $\ex{T} = 1 + (1-p)/f = 1 + 1/(2p)$.
\end{proof}

  \section{Nondestructive amplitude estimation}
  \label{App:nondestrAE}
  The \emph{nondestructive amplitude estimation} algorithm is a variant of \emph{amplitude estimation} introduced by Harrow and Wei~\cite{HW20c} for restoring the initial quantum state with high probability. This is a crucial ingredient in their work for computing a cooling schedule faster than classically. In Section~\ref{Sec:unbiasedAE}, we further extend the properties of amplitude estimation by making the output nearly unbiased. Our algorithm requires a rough amplitude estimate to start with, which can be obtained from the (biased) procedure of~\cite{HW20c}.

Below, we recall the statement of the nondestructive amplitude estimation theorem of Harrow and Wei~\cite{HW20c} and we suggest a new (still biased) construction achieving this result. While the original algorithm of Harrow and Wei may be correct, its analysis seems to use two properties of amplitude estimation~\cite{BHMT02j} that do not hold in general: the collapsing of the post-measurement state to an eigenvector of the Grover operator (it can be a superposition of two eigenvectors when the amplitude $p$ is close to $0$), and the output distribution giving overwhelming probability to a single amplitude estimate (there are two high-probability estimates when the underlying phase $\frac{1}{\pi}\arcsin(\sqrt{p})$ is at equal distance from the two nearest phase estimates on the unit circle). We avoid such complications by describing a new nondestructive amplitude estimation algorithm based on the next ``uncomputation trick'' with boosted success probability.

\begin{lemma}[\sc Amplified uncomputation]
  \label{Lem:uncomp}
  Consider a quantum algorithm that consists of applying a unitary $U$ on a state $\ket{\psi}\ket{0}^{\otimes k}$ and measuring the last $k$ qubits in the standard basis. Let $\pi(1),\dots,\pi(2^k)$ denotes the probabilities of measuring the outcomes $1,\dots,2^k$ respectively. Then, given a lower bound $\lambda \leq \sum_{i=1}^{2^k} \pi(i)^2$ and a success parameter $\eta \in (0,1)$, one can sample $i \in \set{1,\dots,2^k}$ from the distribution
    \[i \sim \frac{\pi(i)^2}{\sum_{j=1}^{2^k} \pi(j)^2}\]
  and restore the state $\ket{\psi}$ with probability at least $1-\eta$ by using $\bo[\big]{\log(1/\eta)/\sqrt{\lambda}}$ applications of~$U$, $U^{\dagger}$ and $\id - 2\proj{\psi}$.
\end{lemma}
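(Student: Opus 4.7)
The plan is to coherently simulate a ``measure-then-uncompute'' cycle on the output of $U$ and then boost the probability of successfully restoring $\ket{\psi}$ using fixed-point amplitude amplification. Explicitly, I would introduce an auxiliary $k$-qubit register $\mathcal{C}$ (initialized to $\ket{0}^{\otimes k}$), alongside the register $\mathcal{A}$ holding $\ket{\psi}$ and the $k$-qubit register $\mathcal{B}$ (initialized to $\ket{0}^{\otimes k}$) on which $U$ acts together with $\mathcal{A}$. Define the unitary $W$ that applies $U$ on $\mathcal{A}\otimes\mathcal{B}$, then copies $\mathcal{B}$ into $\mathcal{C}$ using bitwise CNOTs, and finally applies $U^{\dag}$ on $\mathcal{A}\otimes\mathcal{B}$. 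Writing $U\ket{\psi}\ket{0}^{\otimes k} = \sum_i \alpha_i \ket{\phi_i}\ket{i}$ with $\abs{\alpha_i}^2 = \pi(i)$, and using the identity $\braket{\psi,0^k}{U^{\dag}}{\phi_i,i} = \overline{\alpha_i}$ (by unitarity of $U$), I obtain
\[ W\pt[\big]{\ket{\psi}\ket{0}^{\otimes k}\ket{0}^{\otimes k}} = \sum_i \pi(i)\,\ket{\psi}\ket{0}^{\otimes k}\ket{i} + \ket{\Xi}, \]
where $\ket{\Xi}$ is orthogonal to the ``good'' subspace $\set[\big]{\ket{\psi}\ket{0}^{\otimes k}}\otimes \mathcal{C}$, and the good component has squared norm exactly $\sum_i \pi(i)^2 \geq \lambda$.

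Next, I would apply fixed-point amplitude amplification on $W$ with initial state $\ket{\psi}\ket{0}^{\otimes k}\ket{0}^{\otimes k}$, marked projector $\proj{\psi}\otimes\proj{0^k}\otimes\id_{\mathcal{C}}$, and success-probability lower bound $\lambda$, running $\bo{\log(1/\eta)/\sqrt{\lambda}}$ rounds to produce a final state within trace distance $\eta$ of the normalized good state $\frac{1}{\sqrt{\sum_j\pi(j)^2}}\sum_i \pi(i)\ket{\psi}\ket{0}^{\otimes k}\ket{i}$. Measuring $\mathcal{C}$ in the computational basis then returns $i$ with probability $\pi(i)^2/\sum_j\pi(j)^2$, up to the $\eta$ error, and leaves $\mathcal{A}\otimes\mathcal{B}$ in $\ket{\psi}\ket{0}^{\otimes k}$ with probability at least $1-\eta$, which restores $\ket{\psi}$.

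Each round of fixed-point amplification uses one call to $W$ and $W^{\dag}$ (hence one $U$ and one $U^{\dag}$), one reflection through the initial state, and one reflection through the marked subspace. Both reflections reduce to the reflection through $\ket{\psi}\ket{0}^{\otimes k}$, which I would implement from $R_{\psi} = \id - 2\proj{\psi}$ at constant cost as follows: write a flag qubit to $1$ iff $\mathcal{B}$ holds $\ket{0}^{\otimes k}$, apply $R_{\psi}$ on $\mathcal{A}$ controlled on the flag, and then uncompute the flag. The main conceptual step, which I expect to require the most care, is verifying that the amplitudes along the good subspace are the real, non-negative numbers $\pi(i)$ rather than $\sqrt{\pi(i)}$, so that the post-selected distribution scales as the squared probabilities advertised in the statement. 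The only other subtle point is that one knows just a lower bound $\lambda$ on the success probability, but this is precisely the regime fixed-point amplitude amplification is designed for, so no additional work is needed.
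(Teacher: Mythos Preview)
Your proposal is correct and follows essentially the same approach as the paper: the unitary $W$ you define is precisely the paper's ``uncomputation'' unitary $U_{\mathrm{uncomp}} = (U^{\dagger}\otimes\id)(\id\otimes\cnot)(U\otimes\id)$, the key amplitude computation showing the good component has coefficients $\pi(i)$ (not $\sqrt{\pi(i)}$) is identical, and both proofs finish with fixed-point amplitude amplification toward the projector $\proj{\psi}\otimes\proj{0}\otimes\id$. Your write-up is somewhat more explicit about implementing the required reflections from $R_\psi$, but there is no substantive difference in the argument.
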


\begin{proof}
  Let $U \pt[\big]{\ket{\psi}\ket{0}^{\otimes k}} = \sum_i \alpha_i \ket{\psi_i}\ket{i}$ denote the state computed by the unitary $U$. Define the ``uncomputation'' unitary $U_{\mathrm{uncomp}} = (U^{\dagger} \otimes \id_{\C^{2^k}}) (\id_{\Hil} \otimes \cnot) (U \otimes \id_{\C^{2^k}})$ that consists of running $U$, copying the output register into a new register and running the inverse $U^{\dagger}$. Then, the amplitude of $\ket{\psi}\ket{0}\ket{i}$ in the state obtained by applying $U_{\mathrm{uncomp}}$ on $\ket{\psi}\ket{0}^{\otimes k}\ket{0}^{\otimes k}$ is
    \[(\bra{\psi}\bra{0} \bra{i}) U_{\mathrm{uncomp}} (\ket{\psi}\ket{0}\ket{0}) = \pt[\Big]{\sum_j \alpha^*_j \bra{\psi_j}\bra{j}\bra{i}} \pt[\Big]{\sum_j \alpha_j \ket{\psi_j}\ket{j}\ket{j}} = \abs{\alpha_i}^2 = \pi(i).\]
  We run fixed-point amplitude amplification~\cite{YLC14j} on the projector $\Pi = \proj{\psi} \otimes \proj{0} \otimes \id$ and the state $U_{\mathrm{uncomp}}\pt{\ket{\psi}\ket{0}\ket{0}}$, using the amplitude lower bound $\norm{\Pi U_{\mathrm{uncomp}}\pt{\ket{\psi}\ket{0}\ket{0}}}^2 \geq \lambda$ and the amplification parameter $\eta$. The probability of measuring $\ket{\psi}\ket{0}$ in the first two registers of the resulting state is at least $1-\eta$, in which case the last register collapses to $\frac{1}{\sqrt{\sum_i \pi(i)^2}}\sum_i \pi(i) \ket{i}$. Thus, we can restore $\ket{\psi}$ and sample $i \sim \frac{\pi(i)^2}{\sum_j \pi(i)^2}$ (by measuring the last register) with success probability at least $1-\eta$. The algorithm requires $\bo[\big]{\log(1/\eta)/\sqrt{\lambda}}$ applications of (controlled) $U_{\mathrm{uncomp}}$, $U_{\mathrm{uncomp}}^{\dagger}$, $\id-2\Pi$ and $\id - 2\proj{\psi}$.
\end{proof}

We obtain below the same statement as~\cite{HW20c} for performing nondestructive amplitude estimation. Furthermore, we show that the amplitude estimate is zero with high probability when the amplitude is sufficiently small (as is the case for the regular amplitude estimation~\cite{BHMT02j}).

\begin{proposition}[\sc Nondestructive amplitude estimation]
  \label{Prop:ndAE}
  Let $\ket{\psi}$ be a quantum state and~$\Pi$ be a projection operator with $p = \norm{\Pi \ket{\psi}}^2$. Given $t \geq 1$ and $\eta \in (0,1/2)$, the \emph{nondestructive amplitude estimation} algorithm $\ndae(\ket{\psi},\Pi,t,\eta)$ outputs an estimate $\td{p} \in (0,1)$ such that
    \[\abs*{\td{p} - p} < \frac{\sqrt{p(1-p)}}{t} + \frac{1}{t^2}\]
  with probability at least $1-\eta$. Moreover, if $p \leq 1/(4t^2)$ then it outputs $\td{p} = 0$ with probability at least~$1-\eta$. The algorithm needs one copy of $\ket{\psi}$, which is restored at the end of the computation with probability at least~$1-\eta$, and $\bo{t \log(1/\eta)}$ applications of the reflection operators $\id - 2\proj{\psi}$ and~$\id - 2\Pi$.
\end{proposition}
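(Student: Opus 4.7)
The plan is to apply Lemma~\ref{Lem:uncomp} to the unitary $U$ implementing standard amplitude estimation from~\cite{BHMT02j} with time parameter $t$, acting on $\ket{\psi}\ket{0}^{\otimes k}$ with $k = \ceil{\log t}$ output qubits, and to output $\tilde p = \sin^2(\pi i / t)$ where $i \in \itv{0}{t-1}$ is the measured index. Let $\pi(i)$ denote the outcome probabilities, which by Lemma~\ref{Lem:PE} are given by the Fejér-like kernel $\sin^2(t \Delta_i \pi)/(t\sin(\Delta_i \pi))^2$ around the true phase $\theta = \arcsin(\sqrt p)/\pi$. The key idea is that sampling from the \emph{squared} distribution $\pi(i)^2/\sum_j \pi(j)^2$ (which is what the uncomputation trick produces) still concentrates on the good indices $i^\star,i^\star+1$ nearest to $\theta$, while the trick itself restores $\ket{\psi}$ with high probability.

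Next I would verify the hypotheses of Lemma~\ref{Lem:uncomp}. Using Corollary~\ref{Cor:phaseProba} together with the standard Fejér-kernel analysis, the good indices satisfy $\pi(i^\star)+\pi(i^\star+1) \geq 8/\pi^2$, whence $\sum_j \pi(j)^2 \geq \pi(i^\star)^2+\pi(i^\star+1)^2 \geq 32/\pi^4$. I would therefore set $\lambda = 32/\pi^4$ in Lemma~\ref{Lem:uncomp}; a single invocation then costs $\bo{t\log(1/\eta)}$ reflections, produces a sample from the squared distribution, and restores $\ket{\psi}$ with probability at least $1-\eta$. The accuracy bound $\abs{\tilde p - p} < \sqrt{p(1-p)}/t + 1/t^2$ conditioned on obtaining a good index $i \in \set{i^\star,i^\star+1}$ is exactly the classical BHMT analysis, obtained by Taylor-expanding $\sin^2$ around $\theta$.

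To boost the probability of hitting a good index from a constant to $1-\eta$, I would incorporate the boosting coherently inside $U$: replace $U$ by a unitary $U'$ that runs amplitude estimation $K = \bo{\log(1/\eta)}$ times in parallel on separate ancilla registers and writes the majority estimate into a final output register. A Chernoff bound shows that $U'$ outputs the unique majority value $i^\star$ with probability at least $1-\eta/2$, so its squared distribution has total mass at least $(1-\eta/2)^2 \geq 1/2$ on that same value. Applying Lemma~\ref{Lem:uncomp} to $U'$ with $\lambda = 1/2$ then costs $\bo{1}$ calls to $U'$, i.e.\ $\bo{t\log(1/\eta)}$ reflections, and simultaneously yields the correct $\tilde p$ and restores $\ket{\psi}$ with probability at least $1-\eta$.

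For the low-amplitude claim, when $p \leq 1/(4t^2)$ one has $t\theta \leq 1/(2\pi)$, so the Fejér kernel gives $\pi(0) \geq 1 - \bo{pt^2} = 1 - \bo{1}$, and more importantly $\pi(0) \to 1$ as $p t^2 \to 0$. Feeding this into the coherent majority construction makes $i = 0$ the majority with probability at least $1-\eta/2$, so the final output is $\tilde p = 0$ with probability at least $1-\eta$ as required. The main technical obstacle I anticipate is controlling the joint failure probability of the amplified-uncomputation step and the internal majority vote without losing a second factor of $\log(1/\eta)$ in the complexity; routing the boosting \emph{inside} $U$ (so that only one uncomputation is needed) is the essential move that keeps the final cost at $\bo{t\log(1/\eta)}$ rather than $\bo{t\log^2(1/\eta)}$.
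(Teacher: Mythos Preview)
Your overall architecture is exactly the paper's: build a boosted amplitude-estimation unitary coherently (paper: median of $K=\bo{\log(1/\eta)}$ runs; you: majority), then apply the amplified-uncomputation lemma once to that unitary to sample from the squared output distribution while restoring $\ket{\psi}$. So strategically you are on the right track and there is no missing idea.

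There is, however, a real technical gap in the step ``a Chernoff bound shows that $U'$ outputs the \emph{unique} majority value $i^\star$ with probability at least $1-\eta/2$.'' This is false in general: when $\theta=\tfrac{1}{\pi}\arcsin\sqrt{p}$ sits near the midpoint of $[i^\star/t,(i^\star+1)/t]$, one has $\pi(i^\star)\approx\pi(i^\star+1)\approx 4/\pi^2$, and over $K$ repetitions neither index dominates; the mode (or median) will be $i^\star$ or $i^\star+1$ each with probability close to $1/2$. Consequently the squared distribution does \emph{not} place mass $(1-\eta/2)^2$ on a single value, and your choice $\lambda=1/2$ as well as the final $1-\eta$ accuracy bound do not follow as stated. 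The paper fixes exactly this point by working with the \emph{pair} $\{\td{p}_1,\td{p}_2\}$ throughout: after the median boost, $\Pr[\td{p}\in\{\td{p}_1,\td{p}_2\}]\ge 1-\eta/2$, and then one bounds the squared-distribution mass on the pair in the worst case $q_1=q_2=\tfrac12-\tfrac\eta4$ with the residual $\eta/2$ concentrated on a single bad value, obtaining
\[
\frac{q_1^2+q_2^2}{q_1^2+q_2^2+(\eta/2)^2}\ \ge\ 1-\eta/2.
\]
This also gives $\lambda=1/8$ rather than $1/2$. The same two-value reasoning handles the low-amplitude case cleanly: for $p\le 1/(4t^2)$ one has $\pi(0)\ge 8/\pi^2>1/2$ (not merely $1-\bo{1}$, which is vacuous), so the boosted output is $0$ with probability $\ge 1-\eta/2$ and the squared-distribution calculation is the single-value case $\tfrac{(1-\eta/2)^2}{(1-\eta/2)^2+(\eta/2)^2}\ge 1-\eta/2$. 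Replacing ``unique majority value'' by this two-value analysis closes the gap and makes your argument coincide with the paper's.
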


\begin{proof}
  The standard analysis of the amplitude estimation algorithm (Theorems~11 and 12 in \cite{BHMT02j}) shows that, after $t$ steps, it outputs with probability at least $8/\pi^2$ one of two values $\td{p} \in \set{\td{p}_1, \td{p}_2}$ that both satisfy $\abs*{\td{p}_i - p} < 2\pi\frac{\sqrt{p(1-p)}}{t} + \frac{\pi^2}{t^2}$. Moreover, if $p \leq 1/(4t^2)$ then the probability to output $\td{p} = 0$ is at least $8/\pi^2$. Thus, by taking the median of $\bo{\log(1/\eta)}$ independent runs of amplitude estimation, we obtain $\td{p} \in \set{\td{p}_1, \td{p}_2}$, or $\td{p} = 0$ when $p \leq 1/(4t^2)$, with probability at least~$1-\eta/2$. We apply the amplified uncomputation result (Lemma~\ref{Lem:uncomp}) on this algorithm with amplitude lower bound $\lambda = 1/8$ and success parameter $\eta/2$. If the algorithm succeeds, the probability to sample $\td{p} \in \set{\td{p}_1, \td{p}_2}$ is at least $\frac{2(1/2-\eta/4)^2}{2(1/2-\eta/4)^2 + (\eta/2)^2} \geq 1-\eta/2$, and when $p \leq 1/(4t^2)$ the probability of $\td{p} = 0$ is at least $\frac{(1-\eta/2)^2}{(1-\eta/2)^2 + (\eta/2)^2} \geq 1-\eta/2$.
\end{proof}

  \section{Volume estimation of convex bodies}
  \label{App:volumeEstimation}
  In this appendix, we describe a quantum algorithm for estimating the volume of convex bodies that provides an improvement over the algorithm given in \cite{CCH+19p}. The problem statement is as follows. Let $K \subseteq \R^d$ be a convex body, and let $R > 0$ such that $B(1) \subseteq K \subseteq B(R)$, where $B(r) = \set{x \in \R^d : \norm{x}_2 \leq r}$ is the ball of radius~$r$ centered at~$0$. We wish to estimate the volume $\Vol(K)$ of $K$ up to multiplicative precision $\eps$, i.e., to output $\td{V}$ such that $(1-\eps)\Vol(K) \leq \td{V} \leq (1+\eps)\Vol(K)$. We assume to have access to the convex body by means of a membership oracle~$O_K$, i.e., for any point $x \in \R^d$, we can query whether $x \in K$. For the purpose of this paper, we are interested in the query complexity of this problem, i.e., we wish to minimize the number of queries made to this membership oracle. We note here that previous work also consider time-efficient implementations of algorithms that solve this problem~\cite{CCH+19p}. We expect that similar techniques could also be used to implement our algorithm time-efficiently, but we leave this for future work.

First, observe that if the precision $\eps$ is smaller than $\eps \leq (3/4)^d$ then any polylogarithmic overhead in $1/\eps$ is polynomial in the dimension $d$. Thus, in this regime we can run a simple adaption of approximate counting on a suitably dense discretization of~$B(R)$ -- this will run with~$O(1/\eps)$ queries, which when adding polylogarithmic overhead in already achieves the desired $\wbo{d^3 + d^{2.25}/\eps}$ query complexity. Thus, without loss of generality, we can focus on the regime in which $\eps > (3/4)^d$.

Previous works make use of the \textit{pencil construction}, introduced in \cite{LV06j}, where the idea is to define a new convex body $K' \in \R^{d+1}$ with one extra dimension, as
  \[K' = \set*{\bx = (x_0,x) \in \R^{d+1} : x \in K \land x_0 \in [0,2R] \land \norm{x}_2 \leq x_0}.\]
The algorithm now consists of two steps. First, the volume of $K'$ is estimated  up to multiplicative precision $\eps/2$ using the Markov Chain Monte Carlo framework, with the partition function defined as
  \[Z(\beta) = \int_{K'} e^{-\beta x_0} \;\mathrm{d}\bx.\]
For sufficiently large values of $\beta$, the value of the partition function is almost completely determined by the tip of the pencil, whose shape is known to us a priori. On the other hand, for sufficiently small values of $\beta$, the partition function essentially captures the volume of $K'$. These claims are made more precise in \cite{CCH+19p} and the references therein, resulting in Algorithm~4 in said paper which uses a cooling schedule of length $m = \widetilde{\Theta}(\sqrt{d})$. We can speed up this part by using our unbiased product estimator (Theorem~\ref{Thm:prodEst}) in Step~2 of said algorithm. This reduces the number of steps of the quantum walk by a factor of $\sqrt{m} = \widetilde{\Theta}(d^{1/4})$. Hence, we can estimate the volume of~$K'$ with $\wbo{d^3 + d^{2.25}/\eps}$ calls to the membership oracle.

The second step relates the volumes of $K$ and $K'$. It relies on the observation that $R\Vol(K) \leq \Vol(K') \leq 2R\Vol(K)$ since $[R,2R] \times K \subseteq K' \subseteq [0,2R] \times K$. The idea is then to use \emph{rejection sampling} to obtain an $\eps/2$-precise multiplicative estimate of the ratio between $\Vol(K')$ and~$\Vol(K)$. The procedure outlined in \cite[Page~29]{CCH+19p} prepares approximately uniform samples from~$K$ in $\wbo{d^{2.5}}$ calls to the membership oracle. These samples can be trivially converted to samples from $[0,2R] \times K$ by sampling uniformly from the interval $[0,2R]$ and adding the result as an extra dimension. Classically, one could now take $O(1/\eps^2)$ uniform samples from $[0,2R] \times K$, and count the fraction that is contained in $K'$. This yields an $\eps$-precise multiplicative estimate of the ratio between $\Vol(K)$ and $\Vol(K')$. Quantum approximate counting speeds up this step and only requires $O(1/\eps)$ quantum samples, yielding a total of $\wbo{d^{2.5}/\eps}$ queries to the membership oracle in this step of the algorithm.

We claim that the second step can be done with $\wbo{d^2/\eps}$ instead of $\wbo{d^{2.5}/\eps}$ queries, essentially because sampling from~$K$ requires in fact only $\wbo{d^2}$ quantum queries. For simplicity in the proof, we describe a slightly different rejection sampling strategy achieving this complexity. Instead of sampling from $[0,2R] \times K$ and checking whether the sample is contained in~$K'$, we sample from~$K'$ and check whether the resulting sample is contained in $[R,2R] \times K$. Using the same analysis as above, we obtain an $\eps/2$-precise multiplicative estimate of the ratio between~$\Vol(K)$ and~$\Vol(K')$, using~$O(1/\eps)$ quantum samples from~$K'$. Sampling approximately uniformly from~$K'$ can be achieved with the simulated annealing schedule from \cite[Algorithm~3]{CCH+19p} that requires~$\wbo{d^2}$ calls to the membership oracle. It remains to check that the obtained samples from~$K'$ are sufficiently close to uniform, which they are if we marginally increase the length of the cooling schedule:

Let $\eps_1 > 0$ be fixed later, and let $\beta' \leq \eps_1/(2R)$. We show that the Gibbs state at this inverse temperature, denoted by $\ket{\pi'} = \ket{\pi'_{\beta'}}$, is close to the uniform distribution over $K'$, denoted by~$\ket{\pi}$. To that end, recall that $\Vol(K') \geq Z(\beta')$, and so
\begin{align*}
	\Vol(K')(1 - \ip{\pi}{\pi'}) &= \int_{K'} \left(1 - \sqrt{\frac{\Vol(K')}{Z(\beta')}}e^{-\beta' x_0}\right) \;\mathrm{d}\bx \leq \int_{K'} \left(1 - e^{-\beta'x_0}\right) \;\mathrm{d}\bx \\
	&\leq \int_{K'} \beta'x_0 \;\mathrm{d}\bx \leq 2\beta'R\Vol(K') \leq \eps_1\Vol(K'),
\end{align*}
which implies that $\norm{\ket{\pi} - \ket{\pi'}} = 2\sqrt{1-\ip{\pi}{\pi'}} \leq \sqrt{\eps_1}$.

Next, we can prepare $\ket{\pi'}$ by choosing $m = \lceil\sqrt{d}\log(4dR/\eps_1)\rceil$ in \cite[Algorithm~3]{CCH+19p}, where $m$ denotes the length of the cooling schedule. The final inverse temperature then satisfies $\beta' \leq 2d(1-1/\sqrt{d})^{\sqrt{d}\log(4dR/\varepsilon_1)} \leq 2de^{-\log(4dR/\eps_1)} = \eps_1/(2R)$. Since in every annealing step we are performing $N$ Gibbs state reflections, where $N = O(1)$, we can implement these reflections with precision $\sqrt{\eps_1}/(mN)$, to ensure that the total error amounting from the imperfections of these Gibbs state reflections amounts to $\sqrt{\eps_1}$ as well. Thus, we end up preparing a uniform quantum sample up to precision $2\sqrt{\eps_1}$, with a total number of $O(m\log(mN/\sqrt{\eps_1})) = \wbo{\sqrt{d}\log(1/\eps_1)}$ reflections through Gibbs states.

Now, with $\eps_1 = 1/A^2(2N')^2 = \Theta(\eps^2)$, where $N' = \Theta(1/\eps)$ is the number of calls the state-preparation unitary in the approximate counting algorithm, we obtain that the total accumulated error throughout the whole procedure is at most $1/A$. With $A$ a large enough constant, we ensure that the error probability of this step is negligible. Thus, we obtain that the total number of Gibbs state reflections in this second part of the procedure becomes $\wbo{N'\sqrt{d}\log(1/\eps_1)} = \wbo{\sqrt{d}/\eps}$. Since every reflection through a Gibbs state can be performed with $\wbo{d^{1.5}\mathrm{polylog}(1/\eps)}$ membership queries, the total number of calls to $O_K$ becomes $\wbo{d^2/\eps}$. Thus, this second step is less costly than the first step, and the resulting total query complexity becomes $\wbo{d^3 + d^{2.25}/\eps}$.

\end{appendices}

\end{document}